\documentclass[format=acmsmall, review=false]{acmart}
\pdfoutput=1
\usepackage{acm-ec-18}

\usepackage{graphicx}  
\usepackage{color}
\usepackage{mathrsfs}
\usepackage{amssymb,amsmath,verbatim}
\usepackage{amsfonts,amsthm}
\usepackage {color}
\usepackage{algorithm}
\usepackage{algorithmic}
\usepackage{booktabs,microtype}
\usepackage{tikz}
\usetikzlibrary{arrows,shapes, calc, fit, positioning}
\usepackage{tkz-graph}

\usepackage{verbatim}
\usepackage{etoolbox}

\newcounter{Bew1}
\newcounter{Bew2}


\newcommand{\bbN}{\mathbb{N}}


\newcommand{\mms}{\mathrm{mms}}
\newcommand{\calC}{\mathcal{C}}

\newcommand{\calB}{\mathcal{B}}
\newcommand{\calS}{\mathcal{S}}
\DeclareMathAlphabet{\altmathcal}{OMS}{cmsy}{m}{n} 
\newcommand{\calA}{\altmathcal{A}}
\renewcommand*{\ge}{\geqslant}
\renewcommand*{\le}{\leqslant}
\renewcommand*{\geq}{\geqslant}
\renewcommand*{\leq}{\leqslant}

\renewcommand*{\ge}{\geqslant}
\renewcommand*{\le}{\leqslant}


\numberwithin{equation}{section}

\pagestyle{plain}

\setcounter{secnumdepth}{1}  

\frenchspacing  
\setlength{\pdfpagewidth}{8.5in}  
\setlength{\pdfpageheight}{11in}  
  \pdfinfo{
/Title (Pareto-Optimal Allocation of Indivisible Goods with Connectivity Constraints)
/Author (Ayumi Igarashi and Dominik Peters)
}

\begin{document}

\title{Pareto-Optimal Allocation of Indivisible Goods with Connectivity Constraints}

	\author{Ayumi Igarashi}
	\affiliation{Kyushu University, Fukuoka, Japan}
	\author{Dominik Peters}
	\affiliation{University of Oxford, Oxford, U.K.}

\begin{abstract}
We study the problem of allocating indivisible items to agents with additive valuations, under the additional constraint that bundles must be connected in an underlying item graph. Previous work has considered the existence and complexity of fair allocations. We study the problem of finding an allocation that is Pareto-optimal. While it is easy to find an efficient allocation when the underlying graph is a path or a star, the problem is NP-hard for many other graph topologies, even for trees of bounded pathwidth or of maximum degree 3. We show that on a path, there are instances where no Pareto-optimal allocation satisfies envy-freeness up to one good, and that it is NP-hard to decide whether such an allocation exists, even for binary valuations. We also show that, for a path, it is NP-hard to find a Pareto-optimal allocation that satisfies maximin share, but show that a moving-knife algorithm can find such an allocation when agents have binary valuations that have a non-nested interval structure.
\vspace{-10pt}
\end{abstract}

\maketitle

\section{Introduction}	\label{sec:intro}
In mechanism design with or without money, Pareto-optimality is a basic desideratum: if we select an outcome that is Pareto-dominated by another, users will justifiably complain.
In simple settings, it is computationally trivial to find a Pareto-optimum (e.g., via serial dictatorship). Thus, it is usually sought to be satisfied together with other criteria (like fairness or welfare maximisation).
However, in more complicated settings, even Pareto-optimality may be elusive.

We study the classical problem of allocating indivisible items among agents who have (typically additive) preferences over bundles. 
Following a recent model of \citet{Bouveret2017}, we are interested in settings where the set of items has additional structure specified by a graph $G$ over the items. Agents are only interested in receiving a bundle of items that is connected in $G$.
This model is particularly relevant when the items have a spatial or temporal structure, for example, if we wish to allocate land, rooms, or time slots to agents. 
Time slots, for instance, are naturally ordered in a sequence, and agents will often only value being allocated a contiguous chunk of time, particularly when restart costs are prohibitive.

Given agents' preferences over (connected) bundles, we wish to find an allocation that is \emph{Pareto-optimal} (or \emph{Pareto-efficient}), that is, a connected allocation such that there is no other \emph{connected} allocation which makes some agent strictly better off while making no agent worse off. 
Now, in the standard setting without connectivity constraints and with additive valuations, it is straightforward to find Pareto-optima: For example, we can allocate each item to a person who has the highest valuation for it (maximising utilitarian social welfare in the process), or we can run a serial dictatorship.
However, neither of these approaches respect connectivity constraints. 
In fact, we show that it is NP-hard to construct a Pareto-optimal allocation under connectivity constraints, unless $G$ is extremely simple.

Recent work on the allocation of indivisible items has focussed particularly on ensuring \emph{fairness}. 
Two well-studied fairness notions are due to \citet{Budi11a}, who introduced the \emph{maximin fair share} (MMS) and \emph{envy-freeness up to one good} (EF1).
Both concepts have natural analogues in the setting with connectivity constraints \citep{Bouveret2017,Bilo2018}.
An important question is whether there is a tradeoff between efficiency and fairness, or whether both are simultaneously achievable. In the setting without connectivity constraints, these notions tend to be compatible. 
For example, with additive valuations, \citet{CKM+16a} showed that the maximum Nash welfare solution satisfies EF1 while also being Pareto-optimal.
We investigate whether such tradeoffs exist in the connected setting.
\smallskip

\noindent
\textbf{Contributions.}
\begin{itemize}
	\item For additive valuations, we show that one can find a Pareto-optimum in polynomial time when $G$ is a path or a star. 
	\item We show that, unless P = NP, there is no polynomial-time algorithm that finds a Pareto-optimum when $G$ is a tree, even if valuations are additive and binary, and even if the tree has bounded pathwidth, bounded diameter, or bounded maximum degree. Finding a Pareto-optimum is also hard valuations are 2-additive and $G$ is a path or a star.
	\item When $G$ is a tree, there always exists an allocation which is both Pareto-optimal and satisfies MMS. However, such an allocation is NP-hard to find, even when $G$ is a path; the problem stays hard when weakening MMS to $\alpha$-MMS for any $\alpha > 0$. For a restricted class of binary valuations (non-nested intervals), we give a polynomial-time algorithm.
	\item When $G$ is a path, we give examples with binary additive valuations for which no Pareto-optimal EF1 allocation exists, and show that it is NP-hard to decide whether such an allocation exists.
\end{itemize}

\renewcommand{\arraystretch}{1.2}
\begin{table}[t]
	\tabcolsep=1.5mm
	\centering
	\small
	\begin{tabular}{lllll}
		\toprule
		& general & complete & tree & path\\
		\midrule
		PO  & NP-hard* & poly-time & NP-hard*  & poly-time
		\\
		PO \& MMS & NP-hard* &  & NP-hard* & NP-hard*
		\\
		PO \& EF1 & NP-hard & poly-time$^\dagger$ & NP-hard & NP-hard 
		\\
		\bottomrule
	\end{tabular}
\caption{Overview of our complexity results. Hardness results marked $^*$ hold under Turing reductions. 
	The result $\dagger$ refers to the pseudo-polynomial algorithm by \citet{Barman2018}. 
	The hardness results hold even for additive and binary valuations.}
\label{table}
\end{table}

{\bf Related Work} There is a rich body of the literature on fair division of a {\em divisible} cake into connected pieces. Such a division satisfying envy-freeness always exists \citep{Stromquist80}; nevertheless, it cannot be obtained in finite steps even when the cake is divided among three agents \citep{Stromquist}. In contrast, several efficient algorithms are known to yield a contiguous proportional allocation; see the survey by \citet{Lindner2016} for more details. 

The relation between efficiency and fairness with connected pieces is also well-understood for divisible items. \citet{AumannDomb2010} studied the efficiency loss of fair allocations under connectivity constraints. The papers by \citet{Bei12} and \citet{AumannDH13} considered the computational complexity of finding an allocation with connected pieces maximising utilitarian social welfare.
\citet{Bei12} showed that utilitarian social welfare is inapproximable when requiring that the allocation satisfy proportionality; however, without the proportionality requirement, \citet{AumannDH13} proved that there is a polynomial-time constant-factor approximation algorithm for finding an allocation maximising utilitarian social welfare. The algorithm in \citep{AumannDH13} works also for indivisible items and so applies to our setting when $G$ is a path.
A paper by \citet{Conitzer2004} considers combinatorial auctions; translated to our setting, their results imply that one can find a Pareto-optimal connected allocation in polynomial time, when $G$ is a graph of bounded treewidth and agents have \emph{unit demand}: each agent has a connected demanded bundle such that agents have positive utility if and only if they obtain a superset of the demanded bundle.

In the context of division of indivisible items, \citet{Bouveret2017} formalized the model of fair division with the extra feature that each bundle needs to be connected in the underlying item graph. While they showed that finding an envy-free or proportional connected allocation is NP-hard even on paths, they proved that an allocation satisfying the maximin fair share always exists and can be found in polynomial time when the graph is acyclic; subsequently \citet{Lonc2018} studied the computational complexity of finding an MMS allocation when the graph $G$ is a cycle. Independently of \citet{Bouveret2017}, \citet{Suksompong2017} considered the problem when the items lie on a path, obtaining approximations to several fairness notions such as envy-freeness and proportionality. The recent works of \citet{Bilo2018} and \citet{Oh2018} study the existence of EF1 allocations with connected pieces. They both showed that an EF1 allocation exists when agents have identical valuations. \citet{Bilo2018} also proved that for up to four agents with arbitrary monotonic valuations, an EF1 allocation connected on a path is guaranteed to exist.
Finally, \citet{ABL+16a} studied the computational complexity of finding Pareto-improvements of a given allocation when agents have additive preferences, in the setting without connectivity constraints.
Technically, our hardness proofs use similar techniques to hardness proofs obtained by \citet{AzizBH13} in the context of hedonic games.

\section{Preliminaries}\label{sec:prem}
For an integer $s \in \bbN$, write $[s]=\{1,2,\ldots,s\}$. 
Let $N=[n]$ be a set of \emph{agents} and $G=(V,E)$ be an undirected graph whose vertices are called \emph{items}.
A subset $X$ of $V$ is \emph{connected} if it induces a connected subgraph of $G$.
We write $\calC(V) \subseteq 2^V$ for the set of connected subsets of $V$, also called \emph{bundles}.

Each agent $i \in N$ has a \emph{valuation function} $u_i: \calC(V) \rightarrow \mathbb R$ over connected bundles which satisfies $u_i(\emptyset)=0$ and is \emph{monotonic}, so $X \subseteq Y$ implies $u_i(X) \leq u_i(Y)$.
%
A valuation function $u_i$ is \emph{additive} if $u_i(X)=\sum_{v \in X}u_i(\{v\})$ for each $X \in \calC(V)$. We write $u_i(v) = u_i(\{v\})$ for short. An additive valuation function is \emph{binary} if $u_i(v) \in \{0,1\}$ for all $v\in V$. If an agent $i$ has a binary valuation function, we say that $i$ \emph{approves} item $v$ if $u_i(v) = 1$.

A (connected) \emph{allocation} is a function $\pi:N \rightarrow \calC(V)$ assigning each agent a connected bundle of items, such that each item is allocated exactly once, i.e., $\bigcup_{i \in N}\pi(i)=V$ and $\pi(i) \cap \pi(j) =\emptyset$ for each pair of distinct agents $i,j \in N$. For an allocation $\pi$ and a subset $N'$ of agents, we denote by $\pi|_{N'}$ the allocation restricted to $N'$. 

Given an allocation $\pi$, another allocation $\pi'$ is a {\em Pareto-improvement} of $\pi$ if $u_i(\pi'(i)) \ge u_i(\pi(i))$ for all $i \in N$ and $u_j(\pi'(j)) >  u_j(\pi(j))$ for some $j \in N$. 
We say that a connected allocation $\pi$ is {\em Pareto-optimal} (or \emph{Pareto-efficient}, or \textit{PO} for short) if there is no \emph{connected} allocation that is a Pareto-improvement of $\pi$. 
The \emph{utilitarian social welfare} of an allocation $\pi$ is $\sum_{i\in N} u_i(\pi(i))$. 
It is easy to see that a connected allocation which maximizes utilitarian social welfare among connected allocations is Pareto-optimal.

A connected allocation satisfies \emph{EF1} \citep{Bilo2018,Oh2018} if for any pair of agents $i,j \in N$, either $u_i(\pi(i)) \ge u_i(\pi(j))$ or there is an item $v \in \pi(j)$ such that $\pi(j)\setminus \{v\}$ remains connected and $u_i(\pi(i)) \ge u_i(\pi(j)\setminus \{v\})$. Thus, whenever $i$ envies the bundle of agent $j$, then the envy vanishes if we remove one outer item from the envied bundle.

Let $\Pi_n(G)$ denote the set of partitions of $V$ into $n$ connected bundles of the graph $G$.
The {\em maximin fair share} of an agent $i\in N$
is 
\[
\mms_i = \max_{(P_1, \dots, P_n)\in\Pi_n(G)}\min_{j\in [n]} u_i(P_j). 
\]
A connected allocation $\pi$ is an \emph{MMS allocation} if $u_i(\pi(i)) \ge \mms_i$ for each agent $i \in N$. \citet{Bouveret2017} show that if $G$ is a tree, an MMS allocation exists. Note that this definition of the MMS value varies with the graph $G$, and may be lower than the standard MMS values where the maximisation is taken over all partitions, with no connectivity constraints.

Finally we introduce some graph-theoretic terminology. 
Given a graph $G=(V,E)$ and a subset $X \subseteq V$ of vertices, we denote by $G\setminus X$ the subgraph of $G$ induced by $V\setminus X$. 
The \emph{diameter} of $G$ is the maximum distance between any pair of vertices. For two paths $P_1=(a_1,a_2,\ldots,a_s)$ and $P_2=(b_1,b_2,\ldots,b_t)$, we define the {\em concatenation} $P_1P_2$ of $P_1$ and $P_2$ as follows: $P_1P_2:=(a_1,a_2,\ldots,a_s,b_1,b_2,\ldots,b_t)$, where there is a new edge between $a_s$ and $b_1$. The {\em concatenation} of a finite sequence of paths $P_1, P_2,\ldots,P_k$ can be defined inductively.

\section{Finding Some Pareto-Optimal Allocation}\label{sec:pareto}
We start by considering the problem of producing \textit{some} Pareto-optimal allocation, without imposing any additional constraints on the quality of this allocation. When there are no connectivity requirements (equivalently, when $G$ is a complete graph) and valuations are additive, this problem is trivial: Simply allocate each item $v$ separately to an agent $i$ who has a highest valuation $u_i(v)$ for $v$. The resulting allocation maximizes utilitarian social welfare and is thus Pareto-optimal. When $G$ is not complete, this procedure can produce disconnected bundles. 
We could try to give all items to a single agent (which satisfies our constraints provided that $G$ is connected), but the result need not be Pareto-optimal.
Is it still possible to find a Pareto-optimal allocation for specific graph topologies in polynomial time?

\subsection{Paths and Stars}
For very simple graphs and additive valuations, the answer is positive. 
Our first algorithm works when $G$ is a path. The algorithm identifies an agent $i$ with a nonzero valuation for the item at the left end of the path $G$, and then allocates all items to $i$, except for any items at the right end of the path which $i$ values at 0. We then recursively call the same algorithm to decide on how to allocate the remaining items.

\begin{theorem}\label{thm:PO:path}
When $G$ is a path, and with additive valuations, a Pareto-optimal allocation can be found in polynomial time.
\end{theorem}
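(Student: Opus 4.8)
The plan is to prove that the recursive algorithm sketched above is correct and runs in polynomial time, by induction on the number of agents. First I would state the algorithm precisely: on a subpath $P=(v_1,\dots,v_m)$ with an active agent set $N'$, if $|N'|=1$ or no agent of $N'$ values any item of $P$, assign all of $P$ to a single agent and empty bundles to the rest; otherwise let $v_p$ be the leftmost item that some agent of $N'$ values positively, pick such an agent $i$, let $v_j$ be the \emph{rightmost} item with $u_i(v_j)>0$, set $\pi(i)=\{v_1,\dots,v_j\}$, and recurse on $(v_{j+1},\dots,v_m)$ with agents $N'\setminus\{i\}$. Each call removes one agent and never recurses with items left but no agents, so there are at most $n$ calls, each doing $O(nm)$ elementary operations; the output is manifestly a connected allocation (each bundle a contiguous block). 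Polynomial running time follows, so the substance of the proof lies entirely in establishing Pareto-optimality.

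I would prove Pareto-optimality by induction on $|N'|$. The base cases are immediate: a single agent holding everything is at its maximum utility, and if nobody values anything then every allocation yields utility $0$ to all. For the inductive step, the key observation is that the selected agent $i$ receives value $u_i(\pi(i))=\sum_{\ell\le j}u_i(v_\ell)$, which equals $i$'s total value for all items of the instance and hence the maximum $i$ could possibly obtain, since by the choice of $j$ every item right of $v_j$ is worthless to $i$ and every item left of $v_p$ is worthless to everyone. Consequently, in any alleged Pareto-improvement $\pi'$ of $\pi$, agent $i$ must again attain this maximum, so $\pi'(i)$ is a contiguous block containing every item $i$ values, in particular $v_p$ and $v_j$; hence $\pi'(i)=\{v_a,\dots,v_b\}$ with $a\le p$ and $b\ge j$. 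Since items $v_1,\dots,v_{a-1}$ are worthless to every agent, we may reassign them to $i$ without changing any utility, so we may assume $\pi'(i)=\{v_1,\dots,v_b\}$.

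It remains to contradict the induction hypothesis applied to the recursive call, whose output $\pi|_{N'\setminus\{i\}}$ on the subpath $(v_{j+1},\dots,v_m)$ is Pareto-optimal. If $b=m$, then $\pi'$ leaves every agent of $N'\setminus\{i\}$ with an empty bundle, forcing the strict gain in $\pi'$ to belong to $i$, who is already maxed out --- impossible; so $b<m$. Now $\pi'$ restricted to $N'\setminus\{i\}$ is a connected allocation of $(v_{b+1},\dots,v_m)$; append the (possibly empty) block $\{v_{j+1},\dots,v_b\}$ to whichever of its bundles contains $v_{b+1}$, obtaining a connected allocation $\hat\pi$ of $(v_{j+1},\dots,v_m)$ among $N'\setminus\{i\}$ in which every agent's bundle contains its $\pi'$-bundle. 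By monotonicity no agent of $N'\setminus\{i\}$ is worse off in $\hat\pi$ than in $\pi'$, hence no worse off than in $\pi$; and the agent enjoying a strict gain in $\pi'$ lies in $N'\setminus\{i\}$ and still strictly gains in $\hat\pi$. Thus $\hat\pi$ Pareto-dominates $\pi|_{N'\setminus\{i\}}$ on the recursive instance, contradicting the induction hypothesis.

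I expect the only delicate points to be confined to the last two paragraphs: verifying that $i$ is genuinely maxed out (and therefore so in $\pi'$), that reassigning the worthless prefix $\{v_1,\dots,v_{a-1}\}$ to $i$ and appending the block $\{v_{j+1},\dots,v_b\}$ to another agent keep all bundles connected and decrease nobody's utility, and that the witness of the strict improvement survives the passage to $\hat\pi$. Termination, the running-time bound, the maximality of $i$'s share, and the base cases are all routine.
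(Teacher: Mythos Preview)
Your proof is correct and follows essentially the same approach as the paper: the same greedy left-to-right algorithm (give the first agent with a positively-valued leftmost item the prefix up through their rightmost valued item, then recurse), proved Pareto-optimal by the same induction (the chosen agent is maxed out, so any Pareto-improvement restricts to a Pareto-improvement on the recursive subinstance). You are in fact slightly more careful than the paper in one place: when $\pi'(i)$ strictly contains $\{v_1,\dots,v_j\}$, you explicitly append the surplus block $\{v_{j+1},\dots,v_b\}$ to the neighbouring bundle to obtain a bona fide allocation of the recursive subpath, whereas the paper glosses over this and simply asserts that $\pi'|_{N'\setminus\{i\}}$ is an allocation for the reduced instance.
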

\begin{proof}
The path $G$ is given by $V=\{v_1,v_2,\ldots,v_{m}\}$ where $\{v_j,v_{j+1}\} \in E$ for each $j \in [m-1]$. For a subset $V'$ of $V$, we denote by $\min V'$ the vertex of minimum index in $V'$. 

We design a recursive algorithm $\calA$ that takes as input a subset $N'$ of agents, a subpath $G'=(V',E')$ of $G$, and a valuation profile $(u_i)_{i \in N'}$, and returns a Pareto-optimal allocation of the items in $V'$ to the agents in $N'$. Without loss of generality, we may assume that there is an agent who likes the left-most vertex of $G'$, i.e., $u_i(\min V')>0$ for some $i \in N'$, since otherwise we can arbitrarily allocate that item later without affecting Pareto-optimality.

If $|N'|=1$, then the algorithm allocates all items $V'$ to the single agent. Suppose that $|N'|>1$. 
The algorithm first finds an agent $i$ who has positive value for $\min V'$; it then allocates to $i$ a minimal connected bundle $V_i \subseteq V'$ containing all items in $V'$ to which $i$ assigns positive utility (so that $u_i(V_i) = u_i(V')$). To decide on the allocation of the remaining items, we apply $\calA$ to the reduced instance $(N'\setminus \{i\},G'\setminus V_i,(u_{i'})_{{i'} \in N' \setminus \{i\}})$.  

We will prove by induction on $|N'|$ that the allocation produced by $\calA$ is Pareto-optimal. This is clearly true when $|N'|=1$. Suppose that $\calA$ returns a Pareto-optimal allocation when $|N'| = k-1$; we will prove it for $|N'|=k$. Let $\pi$ be the allocation returned by $\calA$, where $\calA$ chose agent $i$ and allocated the bundle~$V_i$ before making a recursive call. Assume for a contradiction that there is a Pareto-improvement $\pi'$ of $\pi$. Thus, in particular, $u_i(\pi'(i)) \geq u_i(\pi(i))$. Hence, by the algorithm's choice of the bundle $V_i$, we must have $V_i \subseteq \pi'(i)$ and $u_i(\pi'(i)) = u_i(\pi(i))$. Hence, there must be an agent $j'$ different from $i$ who receives strictly higher value in $\pi'$ than in $\pi$. 

Now, since $G \setminus \pi'(i)$ is a subgraph of $G \setminus V_i$, the allocation $\pi'|_{N' \setminus \{i\}}$ is an allocation for the instance $I' = (N'\setminus \{i\},G'\setminus V_i,(u_{i'})_{{i'} \in N' \setminus \{i\}})$. Also, we have 
\begin{itemize}
\item $u_{j}(\pi'(j)) \geq u_{j}(\pi(j))$ for all agents $j \in N' \setminus \{i\}$; and
\item $u_{j'}(\pi'(j'))> u_{j'}(\pi(j'))$ for some $j' \in N' \setminus \{i\}$.
\end{itemize}
Thus, $\pi'|_{N' \setminus \{i\}}$ is a Pareto-improvement of the allocation $\pi|_{N' \setminus \{i\}}$. But $\pi|_{N' \setminus \{i\}}$ is the allocation returned by $\calA$ for the instance $I'$, contradicting the inductive hypothesis that $\calA$ returns Pareto-optimal allocations for $|N'| = k-1$. 
\end{proof}

Another graph type for which we can find a Pareto-optimum is a star. In fact, we can efficiently calculate an allocation maximizing utilitarian social welfare. Note that when $G$ is a star, at most one agent can receive two or more items. This allows us to translate welfare maximization into a bipartite matching instance.

\begin{theorem}
\label{thm:PO:star}
When $G$ is a star, and valuations are additive, a Pareto-optimal allocation can be found in polynomial time.
\end{theorem}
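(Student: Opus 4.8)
The plan is to reduce the problem of computing a utilitarian-welfare-maximizing connected allocation on a star to a maximum-weight bipartite matching problem, which is solvable in polynomial time. Let the star $G$ have center $c$ and leaves $\ell_1, \dots, \ell_{m-1}$. The key structural observation, as the authors note, is that in any connected allocation, the agent receiving the center $c$ can receive an arbitrary connected bundle (namely $\{c\}$ together with any subset of the leaves, since each leaf is adjacent to $c$), while every other agent receives a bundle that is either empty or a single leaf (a single leaf is connected; two or more leaves without the center is disconnected). So an allocation is determined by: (i) which agent $i^*$ gets the center, (ii) which subset $L^* \subseteq \{\ell_1,\dots,\ell_{m-1}\}$ of leaves goes to $i^*$ along with $c$, and (iii) an assignment of the remaining leaves $\{\ell_1,\dots,\ell_{m-1}\} \setminus L^*$ to the other $n-1$ agents, each getting at most one leaf.

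First I would handle the choice of the center-recipient $i^*$ by simply iterating over all $n$ choices; for each fixed $i^*$, I would show how to find the welfare-maximizing completion in polynomial time, and then take the best over all $i^*$. Given $i^*$, I claim the optimal completion is computed by a single max-weight matching. Build a bipartite graph with leaves on one side and agents on the other: for a leaf $\ell$ and agent $j \neq i^*$, add an edge of weight $u_j(\ell)$; for a leaf $\ell$ and agent $i^*$, add an edge of weight $u_{i^*}(\ell)$. Add the fixed contribution $u_{i^*}(c)$ to the objective (it is incurred regardless). Since $i^*$ may take arbitrarily many leaves while each other agent takes at most one, I would instead model $i^*$ as having $m-1$ "copies" (or, more cleanly, allow $i^*$ to be matched to unboundedly many leaves) — this is just a maximum-weight assignment where one side-vertex, $i^*$, has capacity $m-1$ and the others have capacity $1$; equivalently a standard transportation/matching problem. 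We may assume all edge weights are nonnegative by dropping any leaf-agent pair with $u_j(\ell) < 0$ (an agent is never forced to take a leaf, since leaves other than those assigned can go to... — actually every leaf must be allocated, so negative values must be handled: route any leaf that no agent values nonnegatively to whoever minimizes the loss, which the matching formulation already does if we allow all edges). A maximum-weight matching in this graph, plus the $u_{i^*}(c)$ term, gives exactly the maximum utilitarian welfare over connected allocations with $i^*$ at the center.

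Next I would verify correctness of the reduction in both directions: every connected allocation on the star yields a feasible matching of the stated value (by the structural observation), and every feasible matching yields a connected allocation of that value (assign matched leaves accordingly, and give all unmatched leaves to $i^*$ together with $c$ — this keeps $i^*$'s bundle connected and only increases welfare by monotonicity, or leaves it unchanged, and in any case remains feasible). Finally, I would invoke the fact, stated in the preliminaries, that a connected allocation maximizing utilitarian social welfare among connected allocations is Pareto-optimal. Since maximum-weight bipartite matching (with capacities, i.e., the assignment problem) is solvable in polynomial time, and we run it $n$ times, the overall algorithm is polynomial.

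The main obstacle I anticipate is purely bookkeeping rather than conceptual: correctly handling the unbounded capacity of the center-recipient $i^*$ (so that the "matching" is really an assignment problem with one high-capacity node) and correctly handling possibly negative additive values on leaves so that every leaf still gets allocated. Both are routine once the structural dichotomy — at most one agent holds more than one item on a star — is in place, and that dichotomy is the crux of the argument.
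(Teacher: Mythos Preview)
Your proposal is correct and follows essentially the same approach as the paper: guess the center-recipient, model that agent via $m-1$ copies (capacity $m-1$) in a bipartite matching instance with the leaves, and take the best over all guesses to maximize utilitarian welfare, which yields a Pareto-optimum. Your worry about negative leaf values is unnecessary here, since monotonicity together with $u_i(\emptyset)=0$ forces $u_i(v)\ge 0$ for every item $v$ under additive valuations.
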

\begin{proof}
We give an algorithm to find an allocation that maximizes the utilitarian social welfare. 
Let $G$ be a star with center vertex $c$ and $m-1$ leaves.
We start by guessing an agent $i \in N$ who receives the item $c$. 
By connectedness, every other agent can receive at most one (leaf) item. 
To allocate the leaf items, we construct a weighted bipartite graph $H_i$ with bipartition $(N',V\setminus \{c\})$ where $N'$ consists of agents $j \in N\setminus\{i\}$ together with $m-1$ copies $i_1,i_2,\ldots,i_{m-1}$ of agent $i$. 
(These copies represent `slots' in $i$'s bundle.) 
Each edge of form $\{j,v\}$ for $j \in N\setminus\{i\}$ has weight $u_j(v)$ and each edge of form $\{i_k,v\}$ has weight $u_i(v)$.

Observe that each connected allocation in which $i$ obtains $c$ can be identified with a matching in $H_i$: Every leaf object is either matched with the agent receiving it, or is matched with some copy $i_k$ of $i$ if the object is part of $i$'s bundle. Note that utilitarian social welfare of this allocation equals the total weight of the matching. Since one can find a maximum-weight matching in a bipartite graph in polynomial time \citep[see, e.g.,][]{Korte}, we can find an allocation of maximum utilitarian social welfare efficiently. 
\end{proof}

We have shown that finding a Pareto-optimum is easy for paths and for stars. 
An interesting open problem is whether the problem is also easy for caterpillars, a class of graphs containing both paths and stars. One might be able to combine the approaches of Theorems~\ref{thm:PO:path} and~\ref{thm:PO:star} to handle them, but the details are difficult. 
Note that caterplliars are precisely the graphs of pathwidth~1; we discuss a negative result about graphs of pathwidth~2 below. 
Another open problem is whether finding a Pareto-optimum is easy when $G$ is a cycle.

\subsection{Hardness Results}\label{sec:pareto-hard}
For more general classes of graphs, the news is less positive. We show that, unless P = NP, there is no polynomial-time algorithm which produces a Pareto-optimal allocation when $G$ is an arbitrary tree even for binary valuations. Notably, this result implies that it is NP-hard to find allocations maximizing any type of social welfare (utilitarian, leximin, Nash) when $G$ is a tree, since such allocations are also Pareto-optimal.

To obtain our hardness result, we first consider a more general problem which is easier to handle, namely the case where $G$ is a forest. 
Since a Pareto-optimum always exists, we cannot employ the standard approach of showing that a decision problem is NP-hard via many-one reductions; instead we will show (using a Turing reduction) that a polynomial-time algorithm producing a connected Pareto-optimal allocation could be used to solve an NP-complete problem in polynomial time.

\begin{theorem}\label{thm:forest:Perfect}
	Unless \textup{P = NP}, there is no polynomial-time algorithm which finds a Pareto-optimal connected allocation when $G$ is a union of vertex-disjoint paths of size 3, even if valuations are binary and additive.
\end{theorem}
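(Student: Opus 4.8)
\noindent\textit{Proof plan.} Because a Pareto-optimal connected allocation always exists, the plan is not to give a many-one reduction but to show that a hypothetical polynomial-time algorithm producing Pareto-optimal connected allocations would let us decide an NP-complete problem. I would reduce from \textsc{Exact Cover by 3-Sets} (X3C): given a ground set $U$ with $|U|$ divisible by $3$ and a family $\calS$ of $3$-element subsets of $U$, decide whether some subfamily partitions $U$. From such an instance I would build a binary additive instance whose item graph is a disjoint union of paths on three vertices: one component $C_S$ (or a small constant number of components) for each $S \in \calS$, an \emph{element agent} $a_e$ for each $e \in U$, and a supply of \emph{filler agents}, with approvals chosen so that the only ``useful'' ways to allocate a component $C_S$ are (i) to serve the three element agents $\{a_e : e \in S\}$, or (ii) to be swallowed whole by filler agents. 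The structural lever is that a connected allocation of a path on vertices $a,b,c$ has one of only four shapes --- all three items to one agent, a single leaf split off from the rest, its mirror image, or three singletons --- so a component serves at most three agents, and the only way to serve three is the three-singleton split; intuitively, the connectivity constraint here plays the role that forbidden ($-\infty$) coalitions play in the hedonic-games reductions of \citet{AzizBH13}.

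I would arrange that every agent approves at most one item in each component (so every agent has value in $\{0,1\}$ under every allocation) and tune the number of filler agents to $|\calS| - |U|/3$, the number of unused sets. The central claim is then that a connected allocation in which \emph{every} agent receives an approved item --- call it a \emph{perfect} allocation --- exists if and only if the X3C instance is a yes-instance. For one direction, an exact cover yields a perfect allocation: its $|U|/3$ components serve the element agents, and a distinct filler swallows each of the remaining components. For the converse, a counting argument (in a perfect allocation every item lies in some served agent's bundle, and each component serves at most three agents) together with the all-or-nothing property forces the components that serve element agents to form an exact cover.

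Finally, I would run the hypothetical algorithm on the constructed instance, obtain a Pareto-optimal allocation $\pi$, and answer ``yes'' precisely when $\pi$ is perfect. If the instance is a no-instance then no perfect allocation exists, so $\pi$ is not perfect. If it is a yes-instance then some perfect allocation $\sigma$ exists; since every agent's value is in $\{0,1\}$ and $\sigma$ gives everyone value $1$, $\sigma$ weakly Pareto-dominates every non-perfect allocation and strictly improves on any agent it leaves at value $0$, so no non-perfect allocation can be Pareto-optimal, and hence $\pi$ is perfect. This is one oracle call plus a trivial check, so X3C would lie in P. The hard part is the gadget design of the first two paragraphs: a naive ``one path per set, one element agent on each vertex'' encoding fails, because a two-way split of $C_S$ lets two of its three element agents be served there while the third is covered by another component, so ``perfect'' no longer forces an exact cover. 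Making the all-or-nothing behaviour rigid --- most plausibly by letting some auxiliary agents occur in several components, so that a consistency constraint propagates between the otherwise disconnected paths --- is where essentially all of the effort goes.
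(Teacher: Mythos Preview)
Your high-level architecture matches the paper exactly: a Turing reduction from \textsc{X3C}, one length-$3$ path per set, element agents together with $s-r$ filler/dummy agents, and the test ``is the Pareto-optimal allocation returned by the oracle \emph{perfect} (everyone at their maximum value)?''. The Pareto argument in your last paragraph is also the paper's argument verbatim.

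The gap is precisely where you flag it, and it is self-inflicted. You insist that \emph{every} agent approve at most one item per component, so all values lie in $\{0,1\}$; under that restriction the ``naive'' one-path-per-set encoding really does fail for the reason you give, and you are left proposing cross-component consistency gadgets you do not actually build. The paper avoids all of this with one line: the dummy agents approve \emph{every} item. Then ``perfect'' means each dummy has value $3$, which on a disjoint union of $3$-paths forces each dummy to occupy an entire path. That leaves exactly $r$ paths and $3r$ element agents, each needing one approved vertex; since the only element agent approving $v_{S,j}$ is $i_{S^j}$, a simple count shows the $r$ surviving sets form an exact cover. Your two-way-split worry disappears because no element agent can share a path with a dummy once dummies must take whole paths.

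So the missing idea is not elaborate gadgetry but dropping the $\{0,1\}$ constraint on fillers: let them approve everything, set their target to $3$, and the ``naive'' encoding you dismissed is already the proof.
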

\begin{proof}
We give a Turing reduction from {\sc Exact-3-Cover (X3C)}. Recall that an instance of {\sc X3C} is given by a set of elements $X=\{x_1,x_2,\dots, x_{3r}\}$ and a family $\calS = \{S_1,\dots, S_s\}$ of three-element subsets of $X$;
it is a `yes'-instance if and only if there is an {\em exact cover} $\calS' \subseteq \calS$ with $|\calS'|=r$ and $\bigcup_{S \in \calS'}S = X$. For a set $S\in \calS$, order the three elements of $S$ in some canonical way (e.g., alphabetically) and write $S^1, S^2, S^3$ for the elements in that order.

Given an instance $(X, \calS)$ of X3C, for each $S\in\calS$, construct a path $P_S$ on three vertices $v_{S,1}, v_{S,2}, v_{S,3}$. Let $G = \bigcup_{S \in \calS} P_S$. 
For each element $x \in X$, we introduce an agent $i_x$ with $u_{i_x}(v_{S,j}) = 1$ iff $S^j = x$, and $u_{i_x}(v_{S,j}) = 0$ otherwise. 
Thus, agent $i_x$ approves all instances of $x$.
We also introduce $s-r$ dummy agents $d_1,\dots,d_{s-r}$ who approve every item, so $u_{d_k}(v_{S,j}) = 1$ for all $j,k,S$.
Note that for each agent $i_x$, a highest-value connected bundle has value 1, while a highest-value bundle for a dummy agent $d_k$ a highest-value connected bundle has value 3. 

Suppose we had an algorithm $\calA$ which finds a Pareto-optimal allocation. We show how to use $\calA$ to solve X3C.
Run $\calA$ on the allocation problem constructed above to obtain a Pareto-optimal allocation $\pi$.
We claim that the X3C instance $(X,\calS)$ has a solution iff
\begin{equation}
\label{eq:path-perfect}
\begin{aligned}
u_{i_x}(\pi(i_x)) &= 1 \text{ for all $x\in X$ and } \\
u_{d_k}(\pi(d_k)) &= 3 \text{ for all $k \in [s-r]$}.
\end{aligned}
\end{equation}
Since \eqref{eq:path-perfect} is easy to check, this equivalence implies that $\calA$ can be used to solve X3C, and hence our problem is NP-hard.

Suppose $\pi$ satisfies condition \eqref{eq:path-perfect}. We then construct a solution to the instance of X3C. For each $k \in [s-r]$, since $u_{d_k}(\pi(d_k)) = 3$, we must have $\pi(d_k) = P_S$ for some $S\in\calS$. Let $\calS' = \{ S \in \calS : \pi(d_k) \neq P_S \text{ for all $k \in [s-r]$}  \}$. Then $\calS'$ is a solution: Clearly $|\calS'| = r$; further, for every $x \in X$, we have that $\pi(i_x) \in P_S$ for some $S\in \calS$, and since $u_{i_x}(\pi(i_x)) = 1$ by \eqref{eq:path-perfect}, this implies that $x\in S$. Thus, $\bigcup_{S \in \calS'} S = X$. Hence, $\calS'$ is a solution to the X3C instance $(X,\calS)$.

Conversely, suppose there is a solution $\calS'$ to the instance of X3C, but suppose for a contradiction that $\pi$ fails condition \eqref{eq:path-perfect}. Define the following allocation $\pi^*$: For each $x\in X$, identify a set $S\in \calS'$ and an index $j\in [3]$ such that $S^j = x$ and set $\pi^*(i_x) = \{ v_{S,j} \}$; next, write $\calS \setminus \calS' = \{ S'_{1}, \dots, S'_{s-r}  \}$ and set $\pi^*(d_k) = \{v_{S'_k,1}, v_{S'_k,2}, v_{S'_k,3}\}$ for each $k \in [s-r]$. Then $\pi^*$ satisfies  \eqref{eq:path-perfect}. Since $\pi$ fails \eqref{eq:path-perfect}, the allocation $\pi^*$ Pareto-dominates $\pi$, contradicting that $\pi$ is Pareto-optimal. Hence, $\pi$ satisfies  \eqref{eq:path-perfect}, as desired.
\end{proof}


From this result, it immediately follows (for forests) that it is (weakly) NP-hard under Turing reductions to decide whether a given allocation is Pareto-optimal: if we could decide this in polynomial time, we could use such an algorithm to construct a Pareto-improvement, and by using this algorithm repeatedly, we could find a Pareto-optimum. 

Building on the above reduction, we find that it is also hard to find a Pareto-efficient allocation when $G$ is a tree.

\begin{figure*}[ht]
\centering
\begin{tabular}{cc}
\begin{minipage}{0.55\textwidth}
\begin{tikzpicture}[scale=0.7, transform shape, every node/.style={minimum size=8mm, inner sep=1pt}]
	\node at (0,0) {$\dots$};
	\node[draw, circle](0) at (1.2,0) {$v_{S_i,1}$};
	\node[draw, circle](1) at (2.4,0) {$v_{S_i,2}$};
	\node[draw, circle](2) at (3.6,0) {$v_{S_i,3}$};
	\node at (4.8,0) {$\dots$};
	\node at (6.2,0) {$\dots$};
	\node[draw, circle](3) at (7.4,0) {$v_{S_j,1}$};
	\node[draw, circle](4) at (8.6,0) {$v_{S_j,2}$};
	\node[draw, circle](5) at (9.8,0) {$v_{S_j,3}$};
	\node at (11,0) {$\dots$};

	\draw[-, >=latex,thick] (0)--(1);
	\draw[-, >=latex,thick] (2)--(1);
	
	\draw[-, >=latex,thick] (3)--(4);
	\draw[-, >=latex,thick] (4)--(5);
	
\end{tikzpicture}
\end{minipage}
\qquad\quad
\begin{minipage}{0.3\textwidth}
\begin{tikzpicture}[scale=0.6, transform shape, every node/.style={minimum size=8mm, inner sep=1pt}]
	\def \radius {1.5cm}
	\node[draw, circle](c) at (5.8,0) {$c$};
	\node[draw, circle](0) at (3.2,1.7) {${v}_{S_i,1}$};
	\node[draw, circle](1) at (2.8,0.7) {${v}_{S_i,3}$};
	\node[draw, circle](2) at (4,0.7) {${v}_{S_i,2}$};
	
	\node[draw, circle](5) at (7.8,1.7) {${v}_{S_j,1}$};
	\node[draw, circle](3) at (7.4,0.7) {${v}_{S_j,2}$};
	\node[draw, circle](4) at (8.6,0.7) {${v}_{S_j,3}$};
	
	\node[draw, circle](00) at (3.2,-1.7) {${\hat v}_{S_i,1}$};
	\node[draw, circle](01) at (2.8,-0.7) {${\hat v}_{S_i,3}$};
	\node[draw, circle](02) at (4,-0.7) {${\hat v}_{S_i,2}$};
	
	\node[draw, circle](03) at (7.4,-0.7) {${\hat v}_{S_j,2}$};
	\node[draw, circle](04) at (7.8,-1.7) {${\hat v}_{S_j,1}$};
	\node[draw, circle](05) at (8.6,-0.7) {${\hat v}_{S_j,3}$};
    
 	\draw[-, >=latex,thick] (c)--(2);
	\draw[-, >=latex,thick] (c)--(3);
	 \draw[-, >=latex,thick] (c)--(02);
	\draw[-, >=latex,thick] (c)--(03);

	\draw[-, >=latex,thick] (2)--(1) (0)--(2);
	\draw[-, >=latex,thick] (00)--(02) (02)--(01);
	
	\draw[-, >=latex,thick] (3)--(4) (3)--(5);
	\draw[-, >=latex,thick] (03)--(04) (03)--(05);
	
	\node at (5.7,1) {$\dots$};
	\node at (5.7,-1) {$\dots$};
\end{tikzpicture}
\end{minipage}
\end{tabular}
\caption{
Graphs constructed in the proofs of Theorem~\ref{thm:forest:Perfect} and Theorem~\ref{thm:PO:hardness:tree} (pictured left-to-right). 
\label{fig1}
}
\end{figure*}
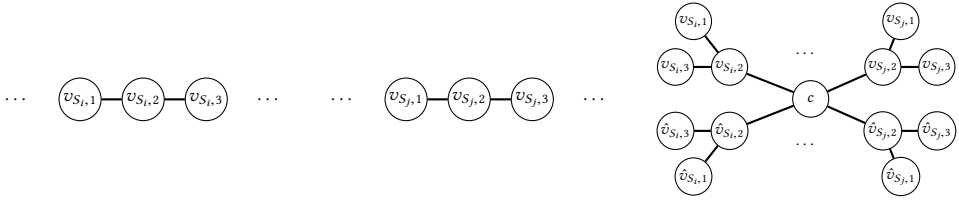

\begin{theorem}\label{thm:PO:hardness:tree}
	Unless \textup{P = NP}, there is no polynomial-time algorithm which finds a Pareto-optimal connected allocation when $G$ is a tree, even if valuations are binary and additive.
\end{theorem}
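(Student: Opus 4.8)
The plan is to give a Turing reduction from {\sc X3C}, reusing the forest gadget of Theorem~\ref{thm:forest:Perfect} twice and stitching the two copies together through a central vertex. Given an instance $(X,\calS)$ of {\sc X3C}, I would build the tree $G$ of Figure~\ref{fig1} (right): introduce a fresh centre $c$, and for every $S \in \calS$ attach to $c$ two vertex-disjoint $3$-paths, a ``positive'' path $v_{S,1}-v_{S,2}-v_{S,3}$ and a ``mirror'' path $\hat v_{S,1}-\hat v_{S,2}-\hat v_{S,3}$, each joined to $c$ through its middle vertex (so $c$ is adjacent to $v_{S,2}$ and to $\hat v_{S,2}$). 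The agents extend those of Theorem~\ref{thm:forest:Perfect}: for each $x \in X$ an agent $i_x$ approving every occurrence of $x$ in both copies, i.e.\ $v_{S,j}$ and $\hat v_{S,j}$ whenever $S^j = x$; and for every $S\in\calS$ two dummy agents $d_S, \hat d_S$ whose approval sets are exactly the three vertices of $P_S$, respectively of $\hat P_S$. Nobody approves $c$, which is a ``useless'' cut vertex that still has to be handed to someone.

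Then I would isolate a polynomial-time checkable predicate $(\star)$ on allocations --- roughly, every mirror dummy $\hat d_S$ has value $3$, every agent $i_x$ has value exactly $1$, every positive dummy satisfies $u_{d_S}(\pi(d_S)) \in \{0,3\}$, and $\sum_S u_{d_S}(\pi(d_S)) = 3(s-r)$ --- and prove two facts, paralleling Theorem~\ref{thm:forest:Perfect}. First, any allocation satisfying $(\star)$ encodes an exact cover: the value-$3$ constraint forces all $3s$ mirror vertices onto mirror dummies, so each $i_x$'s unique approved item is a positive occurrence of $x$; the identity $3r + 3(s-r) = 3s$ then forces every positive vertex to go either to its element owner $i_{S^j}$ or to its own dummy $d_S$ with nothing wasted; hence exactly $r$ positive paths are split among the element agents, those $r$ sets are pairwise disjoint (else some $i_x$ would get two occurrences) and cover $X$. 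Second, if $(X,\calS)$ is a yes-instance with exact cover $\calS'$, then the allocation giving each $\hat P_S$ to $\hat d_S$, each positive $P_S$ with $S\notin\calS'$ to $d_S$, splitting each positive $P_S$ with $S\in\calS'$ into the three singletons $\{v_{S,1}\},\{v_{S,2}\},\{v_{S,3}\}$ handed to $i_{S^1},i_{S^2},i_{S^3}$, and parking $c$ on one mirror dummy, satisfies $(\star)$ --- and, crucially, \emph{every} Pareto-optimal allocation of $G$ must then satisfy $(\star)$ as well. Given this, running the hypothetical polynomial algorithm $\calA$ for producing a Pareto-optimal allocation and testing $(\star)$ decides {\sc X3C}, so no such $\calA$ exists unless \textup{P = NP}.

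The technical heart --- and the reason the construction needs the mirror copy rather than just the forest of Theorem~\ref{thm:forest:Perfect} plus a centre --- is the ``crucially'' clause. The threat is that the centre $c$ is a cut vertex whose owner can splice together a connected bundle touching occurrences of its element in several different branches; such a ``spanning'' Pareto-optimum would give some $i_x$ value $\ge 2$ and violate $(\star)$, so that $\calA$ might return it even on a yes-instance. I expect this to be the main obstacle, and the argument against it to be a branch-by-branch rigidity analysis: a $3$-path attached to $c$ only at its centre can be disposed of in a Pareto-optimum only in the ``clean'' ways (whole path to its dummy, or split into the three singletons among the element owners), because an agent who seizes $c$ and invades some branch strips a middle vertex and strands that branch's dummy below value $3$, while the mirror copy always contains enough untouched $3$-paths to reshuffle $c$ --- together with the fragments it created --- onto the mirror side as a Pareto-improvement. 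Carrying out this exchange argument over all the cases for which vertices the owner of $c$ holds, and then combining it with the two-sided counting above, is where the real work lies.
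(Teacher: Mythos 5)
Your graph is the paper's, but your agent set is not, and that change is fatal. Because you give \emph{every} set $S$ its own pair of dummies $d_S,\hat d_S$ whose approval sets are exactly their own 3-paths, the ``degenerate'' allocation that hands each path $P_S$ to $d_S$, each $\hat P_S$ to $\hat d_S$, parks $c$ on one of them, and gives every element agent $i_x$ the empty bundle is Pareto-optimal on \emph{every} instance: all $2s$ dummies sit at their maximum value $3$, so any Pareto-improvement must leave all $6s$ path vertices with the dummies, which pins every $i_x$ at $0$ and leaves nobody to improve. This allocation violates $(\star)$ (the $i_x$ have value $0$, not $1$), so on a yes-instance the hypothetical algorithm $\calA$ may legitimately return it and your test answers ``no cover.'' The ``crucially'' clause --- that on a yes-instance every Pareto-optimum satisfies $(\star)$ --- is therefore false, and no branch-by-branch exchange argument can rescue it, because there genuinely is no Pareto-improvement of the degenerate allocation to exchange towards. (Even milder mixed allocations break you: with two disjoint sets $S_1,S_2$, give $\hat P_{S_1}$ to $\hat d_{S_1}$, $P_{S_2}$ to $d_{S_2}$, split $P_{S_1}$ and $\hat P_{S_2}$ among the element agents; this is Pareto-optimal but fails both $(\star)$ and its mirror image, so symmetrising the predicate does not help either.)

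The paper avoids this by keeping the \emph{counting} of Theorem~\ref{thm:forest:Perfect} intact on each side: there are only $s-r$ dummies per side, each approving \emph{all} items of its side, and the element agents are duplicated too ($\hat i_x$ approving only copied items), so the whole instance is symmetric under swapping originals and copies. That symmetry lets one assume w.l.o.g.\ that $c$ lies in no original agent's bundle and that original agents hold only original items, which caps their values at $1$ (resp.\ $3$) within a single path; and because the original side has exactly enough agents to consume every original item at full value precisely when a cover exists, any Pareto-optimum failing the ``perfect'' condition is Pareto-dominated by the allocation that is perfect on the originals and sets $\pi^*(\hat j)=\pi(\hat j)\cap(V\setminus V_o)$ for the copies. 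Your design discards both ingredients --- the scarce side-wide dummies and the copied element agents --- and with them the property that yes-instances force every Pareto-optimum into the checkable pattern.
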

\begin{proof}
To extend the reduction in the proof of Theorem \ref{thm:forest:Perfect} to trees, we first `double' the reduction, in making a copy of each object and a copy of each agent with the same preference as the original agent. Specifically, given an instance $(X, \calS)$ of X3C, we create the same instance as in the proof of Theorem \ref{thm:forest:Perfect}; that is, we make a path $P_S =({v}_{S,1}, v_{S,2}, {v}_{S,3})$ for each $S  \in  \calS$, and construct agent $i_x$ for each $x \in X$ and dummy agents ${d}_1,d_2,\dots,{d}_{s-r}$ with the same binary valuations. 

In addition, we make a path ${\hat P}_S$ of copies ${\hat v}_{S,1}, {\hat v}_{S,2}, {\hat v}_{S,3}$ of each $S \in \calS$. 
We then make a copy $\smash{\hat i}_x$ of each agent $i_x$ $(x \in X)$ together with copies ${\hat d}_1,{\hat d}_2,\dots,{\hat d}_{s-r}$ of the dummy agents. 
We also introduce a new dummy item $c$ which serves as the center of a tree; specifically, we attach the center to the middle vertex $v_{S,2}$ of the path $P_S$, and the middle vertex ${\hat v}_{S,2}$ of the path ${\hat P}_S$, for each $S\in \calS$.
The resulting graph $G$ is a tree consisting of $2r+2|\calS|$ paths of length 3, each attached to the vertex $c$ by their middle vertex. See Figure \ref{fig1}. 

No agent has positive value for the center dummy item. Copied agents only value copied objects and have the same valuations as the corresponding original agents, and non-copied agents only value non-copied objects. 
Formally, for each element $x \in X$, each $k \in [s-r]$, and each item $v$, agents' binary valuations are given as follows: 
\begin{itemize}
\item $u_{i_x}(v) = 1$ iff $v=v_{S,j}$ and $S^j = x$;
\item $u_{{d}_k}(v) = 1$ iff $v=v_{S,j}$ for some $S,j$;
\item $u_{{\hat i}_x}(v) = 1$ iff $v={\hat v}_{S,j}$ and $S^j = x$; 
\item $u_{{\hat d}_k}(v) = 1$ iff $v={\hat v}_{S,j}$ for some $S,j$.
\end{itemize}
Write $N_{o}=\{\, i_x :  x \in X\,\} \cup \{d_1,d_2,\ldots,d_{s-r}\}$ for the set of original agents, and $V_o = \bigcup_{S \in \calS} \{ v_{S,1}, v_{S,2}, v_{S,3} \}$ for the set of original items.

Suppose we had an algorithm $\calA$ which finds a Pareto-optimal allocation. We show how to use $\calA$ to solve X3C.
Run $\calA$ on the allocation problem constructed above to obtain a Pareto-optimum $\pi$. We may suppose without loss of generality that $c \not\in \pi(i)$ for any $i \in N_{o}$, since otherwise we can swap the roles of the originals and the copies. We may further assume that each original agent $i\in N_o$ only receives original items, i.e., $\pi(i) \subseteq V_o$, since we can move any other items from $\pi(i)$ into other bundles without making anyone worse off. Hence, since $c\not\in\pi(i)$, we see that $\pi(i) \subseteq P_S$ for some $S\in\calS$ because $\pi(i)$ is connected in $G$. This implies that $u_{i_x}(\pi(i_x)) \le 1$ for all $x\in X$ and $u_{d_k}(\pi(d_k)) \le 3$ for all $k \in [s-r]$.

We now prove that the X3C instance has a solution iff
\begin{equation}
\label{eq:star-perfect}
\begin{aligned}
u_{i_x}(\pi(i_x)) &= 1 \text{ for all $x\in X$ and } \\
u_{d_k}(\pi(d_k)) &= 3 \text{ for all $k \in [s-r]$}.
\end{aligned}
\end{equation}
Since \eqref{eq:star-perfect} is easy to check, this equivalence implies that $\calA$ can be used to solve X3C, and hence our problem is NP-hard.

If \eqref{eq:star-perfect} holds, then the argument in the proof of Theorem~\ref{thm:forest:Perfect} applies and shows that the X3C instance has a solution.

Conversely, suppose there is a solution $\calS' \subseteq \calS$ to the X3C instance.
Then, as in the proof of Theorem \ref{thm:forest:Perfect}, there is an allocation $\pi^* : N_o \to \calC(V_o)$ of the original items to the original agents such that 
$u_{i_x}(\pi^*(i_x)) = 1$ for all $x\in X$ and
$u_{d_k}(\pi^*(d_k)) = 3$ for all $k \in [s-r]$.
Extend $\pi^*$ to all agents by defining $\pi^*(\hat j) = \pi(\hat j) \cap (V \setminus V_o)$ for every copied agent~$\hat j$. It is easy to check that $\pi^*$ is a connected allocation. 
For each copied agent $\hat j$, we have $u_{\hat j}(\pi^*(\hat j)) = u_{\hat j}(\pi(\hat j))$, since $\smash{\hat j}$ has a valuation of 0 for every item in $V_o$. 
Also, for each original agent $i\in N_o$, we have $u_i(\pi^*(i)) \ge u_i(\pi(i))$, since $i$ obtains an optimal bundle under $\pi^*$.
It follows that if $\pi$ fails \eqref{eq:star-perfect}, then $\pi^*$ is a Pareto-improvement of $\pi$, contradicting that $\pi$ is Pareto-optimal. So $\pi$ satisfies \eqref{eq:star-perfect}.
\end{proof}

We note that the graph constructed in the above proof has pathwidth~2 and diameter~5, so hardness holds even for trees of bounded pathwidth and bounded diameter. 
One can adapt our reduction to show that hardness also holds on trees with maximum degree~3, by copying our original reduction many times. See Figure~\ref{fig:max-degree} for the structure of the resulting graph.

\begin{theorem}\label{thm:PO:hardness:max-degree}
	Unless \textup{P = NP}, there is no polynomial-time algorithm which finds a Pareto-optimal connected allocation when $G$ is a tree with maximum degree~3, even if valuations are binary and additive.
\end{theorem}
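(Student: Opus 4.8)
The plan is to take the tree $G$ constructed in the proof of Theorem~\ref{thm:PO:hardness:tree} and replace the single high-degree center vertex $c$ by a long path of degree-$3$ dummy vertices, so that each of the $2r+2|\calS|$ length-$3$ paths is attached to its own distinct ``connector'' vertex along this backbone. Concretely, if the backbone is $c_1, c_2, \dots, c_T$ with $T = 2r+2|\calS|$, then the $\ell$-th length-$3$ path $P$ gets its middle vertex joined by an edge to $c_\ell$. Every backbone vertex then has degree at most $3$ (two backbone neighbours plus one path attached), every middle-of-path vertex has degree at most $3$ (two path neighbours plus one backbone edge), and every end-of-path vertex has degree $1$; so $G$ has maximum degree $3$, and it is clearly still a tree.

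The valuations are exactly as before: each backbone vertex $c_\ell$ is valued $0$ by all agents; original agents $i_x, d_k$ value only original path-items and copied agents $\hat i_x, \hat d_k$ value only copied path-items, in the same binary pattern as in Theorem~\ref{thm:PO:hardness:tree}. The Turing reduction from X3C is then run verbatim: given an algorithm $\calA$ producing a Pareto-optimal connected allocation, run it to get $\pi$, and check condition~\eqref{eq:star-perfect}. The forward direction (if \eqref{eq:star-perfect} holds then X3C has a solution) is identical to before, reading off $\calS'$ from which sets $S$ have $P_S$ given wholly to a dummy. For the converse, given an X3C solution $\calS'$ one builds $\pi^*$ as in Theorem~\ref{thm:PO:hardness:tree}: original agents get the canonical optimal bundles on the original paths, copied agents keep $\pi(\hat j) \cap (V\setminus V_o)$, and the (valueless) backbone vertices $c_1,\dots,c_T$ are distributed so that the whole thing is a connected allocation — and if $\pi$ failed \eqref{eq:star-perfect}, then $\pi^*$ Pareto-dominates it.

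The one genuinely new point, and the main thing to verify carefully, is the reduction step in which we argue about $\pi$ before testing \eqref{eq:star-perfect}: namely that we may assume each original agent receives a subset of a single original path $P_S$ (so that $u_{i_x}(\pi(i_x))\le 1$ and $u_{d_k}(\pi(d_k))\le 3$), and symmetrically for copied agents. With the single center $c$ this was immediate; now a connected bundle can run along the backbone and pick up items from several paths at once. The fix: since no agent values backbone vertices, if some agent's bundle contains a backbone vertex we can strip all backbone vertices out of it without lowering any valuation, re-assigning each freed $c_\ell$ to whichever agent currently owns (part of) the path $P$ attached at $c_\ell$, or to a backbone neighbour — a short case analysis shows the resulting allocation is still connected and Pareto-weakly-better, and now every non-dummy agent's bundle lies within one length-$3$ path. (One must also handle the boundary bookkeeping of who ends up with the backbone vertices in the final allocation, but since these items are worthless this never affects Pareto-optimality.) Once this normalisation is in place, the rest of the argument is word-for-word that of Theorem~\ref{thm:PO:hardness:tree}, and the bounded-degree claim follows. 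The backbone construction is where essentially all the care goes; everything else is inherited.
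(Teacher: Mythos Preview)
Your normalisation step is where the argument breaks. Consider an original dummy agent $d_k$, who approves \emph{every} original path-item. In your backbone tree, nothing stops $\pi(d_k)$ from being, say, $P_{S_1}\cup\{c_1,c_2\}\cup P_{S_2}$ (with $P_{S_1}$ hanging off $c_1$ and $P_{S_2}$ off $c_2$): this is connected and gives $d_k$ value~$6$. If you now ``strip'' the backbone vertices $c_1,c_2$ from $d_k$'s bundle, what remains for $d_k$ is $P_{S_1}\cup P_{S_2}$, which is disconnected; to restore connectivity you must drop one of the paths, cutting $d_k$'s value from $6$ to $3$. That is not Pareto-weakly-better, so the modified allocation need not be Pareto-optimal, and you cannot assume $u_{d_k}(\pi(d_k))\le 3$. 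Worse, the equivalence you want then fails outright: if X3C has a solution, your $\pi^*$ gives each $d_k$ value exactly~$3$, so it does \emph{not} Pareto-dominate a $\pi$ in which some $d_k$ already has value~$6$; hence you cannot conclude that $\pi$ satisfies~\eqref{eq:star-perfect}. The doubling trick from Theorem~\ref{thm:PO:hardness:tree} worked only because the \emph{single} center $c$ separated every pair of length-$3$ paths; a long backbone does not have this property, and two copies are not enough to pigeonhole on it.

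The paper's proof fixes exactly this by making $s+1$ copies (``families'') rather than two, and structuring $G$ so that the bottleneck set $B$ separating different arms has only $s$ vertices. By pigeonhole, some whole family $N^f$ avoids $B$ entirely, and then every agent in that family is genuinely confined to a single arm; within an arm the family-$f$ path-items are mutually disconnected, so after discarding zero-value items each such agent's bundle sits inside a single $P_S^f$. Your construction can be repaired along the same lines, but it requires the many-copies pigeonhole idea, not the two-copy swap.
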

\begin{proof}
	We give a Turing reduction from X3C similar to Theorem~\ref{thm:PO:hardness:tree}.
	
	Given an instance $(X, \calS)$ of X3C, with $X=\{x_1,x_2,\dots, x_{3r}\}$ and $\calS = \{S_1,\dots, S_s\}$, we build the following instance.
	
	\begin{figure*}[htb]
		\centering
		\begin{tikzpicture}
		[scale=0.45, transform shape, every node/.style={minimum size=14mm, inner sep=1pt, font=\huge}]
		\node[draw, circle](0) at (2,0) {$v_{S_1,1}^1$};
		\node[draw, circle](1) at (4,0) {$v_{S_1,2}^1$};
		\node[draw, circle](2) at (6,0) {$v_{S_1,3}^1$};
		\node[draw, circle](c1) at (8,0) {$c_{S_1}^1$};
		\draw[-] (0)--(1) (1)--(2) (2)--(c1);
		
		\node[draw, circle](0) at (2,2) {$v_{S_1,1}^2$};
		\node[draw, circle](1) at (4,2) {$v_{S_1,2}^2$};
		\node[draw, circle](2) at (6,2) {$v_{S_1,3}^2$};
		\node[draw, circle](c2) at (8,2) {$c_{S_1}^2$};
		\draw[-] (0)--(1) (1)--(2) (2)--(c2);
		
		\node[draw, circle](0) at (2,6) {$v_{S_1,1}^{s+1}$};
		\node[draw, circle](1) at (4,6) {$v_{S_1,2}^{s+1}$};
		\node[draw, circle](2) at (6,6) {$v_{S_1,3}^{s+1}$};
		\node[draw, circle](cs) at (8,6) {$c_{S_1}^{s+1}$};
		\draw[-] (0)--(1) (1)--(2) (2)--(cs);
		
		\node[draw, circle](b1) at (8,-2) {$b_{S_1}$};
		
		\node[font=\huge] (dots) at (8,4) {$\vdots$};
		\draw (b1)--(c1) (c1)--(c2) (c2)--(dots) (dots)--(cs);
		
		\begin{scope}[shift={(9,0)}]
		\node[draw, circle](0) at (2,0) {$v_{S_2,1}^1$};
		\node[draw, circle](1) at (4,0) {$v_{S_2,2}^1$};
		\node[draw, circle](2) at (6,0) {$v_{S_2,3}^1$};
		\node[draw, circle](c1) at (8,0) {$c_{S_2}^1$};
		\draw[-] (0)--(1) (1)--(2) (2)--(c1);
		
		\node[draw, circle](0) at (2,2) {$v_{S_2,1}^2$};
		\node[draw, circle](1) at (4,2) {$v_{S_2,2}^2$};
		\node[draw, circle](2) at (6,2) {$v_{S_2,3}^2$};
		\node[draw, circle](c2) at (8,2) {$c_{S_2}^2$};
		\draw[-] (0)--(1) (1)--(2) (2)--(c2);
		
		\node[draw, circle](0) at (2,6) {$v_{S_2,1}^{s+1}$};
		\node[draw, circle](1) at (4,6) {$v_{S_2,2}^{s+1}$};
		\node[draw, circle](2) at (6,6) {$v_{S_2,3}^{s+1}$};
		\node[draw, circle](cs) at (8,6) {$c_{S_2}^{s+1}$};
		\draw[-] (0)--(1) (1)--(2) (2)--(cs);
		
		\node[draw, circle](b2) at (8,-2) {$b_{S_2}$};
		
		\node[font=\huge] (dots) at (8,4) {$\vdots$};
		\draw (b2)--(c1) (c1)--(c2) (c2)--(dots) (dots)--(cs);
		\end{scope}
		
		\begin{scope}[shift={(20,0)}]
		\node[draw, circle](0) at (2,0) {$v_{S_s,1}^1$};
		\node[draw, circle](1) at (4,0) {$v_{S_s,2}^1$};
		\node[draw, circle](2) at (6,0) {$v_{S_s,3}^1$};
		\node[draw, circle](c1) at (8,0) {$c_{S_s}^1$};
		\draw[-] (0)--(1) (1)--(2) (2)--(c1);
		
		\node[draw, circle](0) at (2,2) {$v_{S_s,1}^2$};
		\node[draw, circle](1) at (4,2) {$v_{S_s,2}^2$};
		\node[draw, circle](2) at (6,2) {$v_{S_s,3}^2$};
		\node[draw, circle](c2) at (8,2) {$c_{S_s}^2$};
		\draw[-] (0)--(1) (1)--(2) (2)--(c2);
		
		\node[draw, circle](0) at (2,6) {$v_{S_s,1}^{s+1}$};
		\node[draw, circle](1) at (4,6) {$v_{S_s,2}^{s+1}$};
		\node[draw, circle](2) at (6,6) {$v_{S_s,3}^{s+1}$};
		\node[draw, circle](cs) at (8,6) {$c_{S_s}^{s+1}$};
		\draw[-] (0)--(1) (1)--(2) (2)--(cs);
		
		\node[draw, circle](b3) at (8,-2) {$b_{S_s}$};
		
		\node[font=\huge] (dots) at (8,4) {$\vdots$};
		\draw (b3)--(c1) (c1)--(c2) (c2)--(dots) (dots)--(cs);
		\end{scope}
		
		\node[font=\huge] (bdots) at (22,-2) {$\cdots$};
		\draw (b1)--(b2) (b2)--(bdots) (bdots)--(b3);
		
		\end{tikzpicture}
		
		\caption{
			Graph with maximum degree~3 constructed in the proof of Theorem~\ref{thm:PO:hardness:max-degree}.
			\label{fig:max-degree}
		}
	\end{figure*}
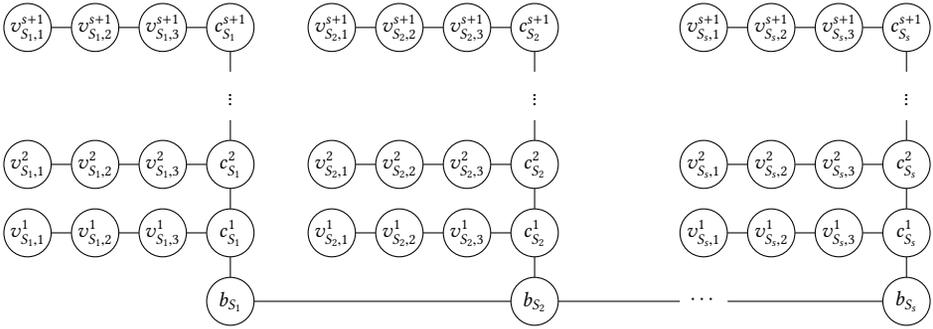
	
	Our set of agents will consist of $s+1$ copies of the set of agents in the reduction of Theorem~\ref{thm:forest:Perfect}.
	So, for each $f = 1,\dots, s+1$, we introduce the following agents:
	\[
		N^f = \{ i_x^f : x \in X  \} \cup \{ d_1^f,\dots, d_{s-r}^f \}
	\]
	The complete set of agents is then $N = N^1 \cup \cdots \cup N^{s+1}$. We call each of the sets $N^f$ a \emph{family} of agents.
	
	The set of items is $V = B \cup C \cup \bigcup_{S\in\calS} \bigcup_{f=1}^{s+1} P_S^f$, where $B = \{ b_S : S \in \calS \}$, $C = \{ c_S^f : S \in \calS, f \in [s+1]  \}$, and $P_S^f = \{ v_{S,1}^f, v_{S,2}^f, v_{S,3}^f \}$ for each $S\in \calS$ and each $f = 1,\dots,s+1$.
	
	These items are connected in a graph $G$ that is a union of the following paths: 
	\begin{itemize}
		\item The items in $B$ are connected in a path $B_{S_1}, \dots, B_{S_s}$. 
		\item For each $S\in \calS$, the items $b_S, c_S^1, \dots, c_S^{s+1}$ are connected in a path. 
		\item For each $S\in \calS$ and each $f = 1,\dots,s+1$, the items $c_S^f, v_{S,1}^f, v_{S,2}^f, v_{S,3}^f$ are connected in a path.
	\end{itemize}
	See Figure~\ref{fig:max-degree} for an illustration of this graph. Note that $G$ has maximum degree 3.
	
	We next specify the agents' binary valuations. Unless otherwise specified, each agent has valuation 0 for each item. For each $f = 1,\dots,s+1$, and for each $x \in X$, the agent $i_x^f$ approves those items $v_{S,j}^f$ such that $S^j = x$. Further, for each $f = 1,\dots,s+1$, agent $d_k^f$ approves all items $v_{S,j}^f$.
	Note that no agent approves item in $B$ or in $C$.
	
	Suppose we had an algorithm $\calA$ which finds a Pareto-optimal allocation. We show how to use $\calA$ to solve X3C.
	Run $\calA$ on the allocation problem constructed above to obtain a Pareto-optimum~$\pi$. 
	There are $s$ items in $B$, and there are $s+1$ families, so there is some family such that no item in $B$ is allocated to any family member. That is, there is some family $f$ such that $b_S \not\in \pi(i^f)$ for all $S\in\calS$ and all $i^f\in N^f$. In particular, by connectedness of $\pi$, this means that no agent in $N^f$ receives items from two different ``arms'' of $G$.
	Write $V^f = \bigcup_{S \in \calS} P_S^f$.
	We may assume that $\pi(i^f) \subseteq V^f$ for every $i^f \in N^f$, since we can move any other items from $\pi(i)$ into other bundles without making anyone worse off. Hence, we see that $\pi(i^f) \subseteq P_S^f$ for some $S\in\calS$ because $\pi(i^f)$ is connected in $G$. This implies that $u_{i_x^f}(\pi(i_x^f)) \le 1$ for all $x\in X$ and $u_{d_k^f}(\pi(d_k^f)) \le 3$ for all $k \in [s-r]$.
	
	We now prove that the X3C instance has a solution iff
	\begin{equation}
	\label{eq:max-degree-perfect}
	\begin{aligned}
	u_{i_x^f}(\pi(i_x^f)) &= 1 \text{ for all $x\in X$ and } \\
	u_{d_k^f}(\pi(d_k^f)) &= 3 \text{ for all $k \in [s-r]$}.
	\end{aligned}
	\end{equation}
	Since \eqref{eq:max-degree-perfect} is easy to check, this equivalence implies that $\calA$ can be used to solve X3C, and hence our problem is NP-hard.
	
	If \eqref{eq:max-degree-perfect} holds, then the argument in the proof of Theorem~\ref{thm:forest:Perfect} applies and shows that the X3C instance has a solution.
	
	Conversely, suppose there is a solution $\calS' \subseteq \calS$ to the X3C instance.
	Then, as in the proof of Theorem \ref{thm:forest:Perfect}, there is an allocation $\pi^* : N^f \to \calC(V^f)$ of items to the agents in family $f$ such that 
	$u_{i_x^f}(\pi^*(i_x^f)) = 1$ for all $x\in X$ and
	$u_{d_k^f}(\pi^*(d_k^f)) = 3$ for all $k \in [s-r]$.
	Extend $\pi^*$ to all agents by defining $\pi^*(j) = \pi(j) \cap (V \setminus V^f)$ for every agent~$j\in N\setminus N^f$. It is easy to check that $\pi^*$ is a connected allocation. 
	For each agent $j\in N\setminus N^f$, we have $u_{j}(\pi^*(j)) = u_{j}(\pi(j))$, since $j$ has a valuation of 0 for every item in $V^f$. 
	Also, for each agent $i\in N^f$, we have $u_i(\pi^*(i)) \ge u_i(\pi(i))$, since $i$ obtains an optimal bundle under $\pi^*$.
	It follows that if $\pi$ fails \eqref{eq:max-degree-perfect}, then $\pi^*$ is a Pareto-improvement of $\pi$, contradicting that $\pi$ is Pareto-optimal. So $\pi$ satisfies \eqref{eq:max-degree-perfect}.
\end{proof}

In the last section, we saw positive results for paths and stars when valuations are additive. 
For more general preferences over bundles, we again obtain a hardness result.

\begin{theorem}
	\label{thm:PO:hardness:path-star}
	Unless \textup{P = NP}, there is no polynomial-time algorithm which finds a Pareto-optimal connected allocation when $G$ is a path, when valuations are 2-additive.
	The problem is also hard when $G$ is a star and valuations are 2-additive.
	Both problems are also hard for dichotomous valuations specified by a  formula of propositional logic.
\end{theorem}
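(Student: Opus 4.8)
The plan is to give four Turing reductions from {\sc X3C}, in the spirit of Theorems~\ref{thm:forest:Perfect}--\ref{thm:PO:hardness:max-degree}: since a Pareto-optimal allocation always exists, I will show that a hypothetical polynomial-time algorithm $\calA$ producing \emph{some} Pareto-optimal connected allocation can be used as an oracle that decides {\sc X3C}, by running $\calA$ once on a carefully built instance and checking an easily verifiable property of the returned allocation.

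\textbf{Path, propositional-formula valuations.} I would re-use the gadget of Theorem~\ref{thm:forest:Perfect}, but concatenate the $s$ three-vertex paths into a single path $G=P_{S_1}P_{S_2}\cdots P_{S_s}$ with $P_{S_j}=(v_{S_j,1},v_{S_j,2},v_{S_j,3})$. Keep the element agents $i_x$ and the $s-r$ dummies, now with monotone dichotomous valuations given by formulas: $i_x$ has formula $\bigvee\{\,x_{v_{S,j}}:S^j=x\,\}$ (happy iff it receives some copy of $x$), and each dummy $d_k$ has formula $\bigvee_{S\in\calS}\bigl(x_{v_{S,1}}\wedge x_{v_{S,2}}\wedge x_{v_{S,3}}\bigr)$ (happy iff its bundle contains all three vertices of some $P_S$). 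If the {\sc X3C} instance has an exact cover $\calS'$, then giving each $d_k$ one of the $s-r$ paths $P_S$ with $S\notin\calS'$ and each $i_x$ the copy $v_{S,j}$ with $S\in\calS'$, $S^j=x$, makes every agent happy; since every agent's value is at most $1$, this Pareto-dominates any allocation with an unhappy agent, so every Pareto-optimal allocation makes all agents happy. Conversely, if $\pi$ makes all agents happy, then the $s-r$ dummy bundles are pairwise-disjoint intervals of $G$ each fully containing one of the $s$ aligned blocks $\{v_{S_j,1},v_{S_j,2},v_{S_j,3}\}$, and the $3r$ element bundles are pairwise-disjoint nonempty intervals disjoint from these; counting the $3s$ items forces each dummy bundle to be exactly one aligned block and each element bundle to be one item, so the $r$ remaining blocks are split one item per element agent, and happiness of $i_x$ at $v_{S,j}$ forces $S^j=x$ -- giving an exact cover. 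Hence running $\calA$ once and testing whether all agents are happy decides {\sc X3C}.

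\textbf{Star, propositional-formula valuations.} Here $G$ is a star with centre $c$ and leaves $v_{S,j}$. Keep the element agents $i_x$ (happy iff they hold a leaf $v_{S,j}$ with $S^j=x$; in an all-happy allocation each receives a single leaf, since the unique holder of $c$ is forced to be the next agent), and replace the dummies by one \emph{collector} $D$ with formula $x_c\wedge\mathrm{Th}_{s-r}\bigl((x_{v_{S,1}}\wedge x_{v_{S,2}}\wedge x_{v_{S,3}})_{S\in\calS}\bigr)$, where $\mathrm{Th}_{s-r}$ is the ``at least $s-r$ of the inputs'' threshold, which has a polynomial-size monotone formula. The same domination argument shows every Pareto-optimal allocation is all-happy; an item count ($3s+1$ items, $D$ needing $c$ plus the leaves of at least $s-r$ complete blocks, the $3r$ element agents needing $3r$ further leaves) forces $D$ to take $c$ together with exactly $s-r$ complete blocks, leaving the other $r$ blocks split one leaf per element agent, again an exact cover.

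\textbf{The 2-additive cases} are where I expect the real work. A \emph{monotone} $2$-additive function cannot ``cap'' its value the way a dichotomous one does, so an agent that values several blocks additively profits from extending its bundle to cover many of them, and such an over-extended allocation can itself be Pareto-optimal, defeating the ``check all happy'' strategy. The fix I would pursue: make each auxiliary (``dummy''/``collector'') agent value only the items of a \emph{single} block together with one private item placed next to it, via a single positive pairwise synergy; then extending that agent's bundle gives it no extra value, so any over-extended allocation admits a Pareto-improvement and is not Pareto-optimal, while the intended allocation stays Pareto-optimal. The ``count to $r$'' is then recovered not from the number of dummies but from a tight item budget plus auxiliary single-item agents guarding the private items, and one reduces from a bounded-occurrence variant of {\sc X3C} so that the negative cross-synergies needed to stop an element agent from preferring two copies of its element remain monotone; for the star, all of this is concentrated on the unique holder of the centre. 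The hard part will be making these gadgets mesh so that in a ``yes''-instance the intended allocation is, up to the features we check, the \emph{only} Pareto-optimal allocation, and re-deriving the counting argument in the presence of the private items and guard agents.
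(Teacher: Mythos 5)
The dichotomous (goal-formula) halves of your proposal are fine: your path construction is essentially the paper's, and your star construction is a genuinely different route (X3C with a single ``collector'' whose goal is $c$ plus an at-least-$(s-r)$-blocks threshold, rather than the paper's reduction from \textsc{Vertex Cover} with one cover-seeking agent and nonempty-bundle dummies). Your route works, but note it leans on the non-trivial fact that monotone threshold functions admit polynomial-size monotone formulas, which you would need to justify; the paper's \textsc{Vertex Cover} gadget needs only a conjunction of disjunctions.

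The genuine gap is the 2-additive case, for both the path and the star -- which is the headline claim of the theorem -- and it stems from a false premise. You assert that a monotone 2-additive valuation ``cannot cap its value the way a dichotomous one does,'' and on that basis you only sketch an unfinished alternative gadget. But capping is exactly what negative pairwise weights achieve once you exploit connectivity, and this is the paper's central trick. On the path, give $i_x$ weight $w_{i_x}(\{v\})=1$ for each $v\in V_x$ and $w_{i_x}(\{v,v'\})=-1$ for each pair $v,v'\in V_x$ that are \emph{consecutive} along the path (consecutive within $V_x$, not necessarily adjacent): any connected bundle containing $q\ge 1$ items of $V_x$ contains exactly $q-1$ such consecutive pairs, so its value is $q-(q-1)=1$; the induced valuation on connected bundles is dichotomous and monotone, so no bounded-occurrence variant of X3C is needed. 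The dummies are capped analogously via $w_{d_k}(\{v_{S_j,1},v_{S_j,3}\})=1$ and $w_{d_k}(\{v_{S_j,1},v_{S_{j+1},3}\})=-1$, so a connected bundle is worth $1$ iff it contains some complete block $P_S$. For the star, the paper reduces from \textsc{Vertex Cover}: the main agent has $w_i(\{w\})=\mathrm{degree}_H(w)$ and $w_i(\{w_1,w_2\})=-1$ on edges, so $u_i(X)$ equals the number of edges covered by $W\cap X$ and is maximal ($=|E|$) exactly on bundles containing a vertex cover, while each dummy has $w_{d_j}(\{o\})=1$ for every item and $w_{d_j}(\{c,w\})=-1$, capping every nonempty connected bundle of the star at value $1$. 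With these implementations the ``run $\calA$ once and check the easily verifiable condition'' scheme you set up goes through unchanged; without them, half the theorem remains unproved.
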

A valuation function $u_i : \calC(V) \to \mathbb R$ is \emph{dichotomous} if $u(X) \in \{0,1\}$ for each bundle $X \in \calC(V)$. (Note that this is different from binary additive valuations, which are dichotomous only if an agent approves at most one item.) A (monotonic) dichotomous valuation function can be specified by a propositional \emph{goal formula} $\varphi$ over the items $V$ using only positive literals, such that $u_i(X) = 1$ if and only if $\varphi$ is satisfied by the variable assignment that sets exactly the variables in $X$ to true. For example, an agent with goal formula $(v_1 \land v_2) \lor v_3$ has positive utility for all bundles $X\in\calC$ with $\{v_1,v_2\}\subseteq X$ or with $v_3 \in X$. For more on such valuations, see \citet{Bouveret2008}.

For a set $X$, let $\calB(X) = \{ Y \subseteq X : 1\le |Y| \le 2 \}$ be the collection of subsets of $X$ (not necessarily connected) of size 1 or 2. A valuation function $u_i : \calC(V) \to \mathbb R$ is \emph{2-additive} if there is a function $w_i : \calB(V) \to \mathbb R$ such that
\[
u_i(X) = \sum_{Y\in \calB(X)} w_i(Y).
\]
2-additive valuation functions allow agents to specify that two items $v_i,v_j$ are complements (if $w_i(\{v_i,v_j\}) \ge 0$) or supplements  (if $w_i(\{v_1,v_2\}) \le 0$). If $w_i(\{v_i,v_j\}) = 0$ for all $v_i,v_j\in V$, then $u_i$ is additive.

\begin{proof}[Proof of Theorem~\ref{thm:PO:hardness:path-star} for a path]
	We give a Turing reduction from X3C similar to Theorem~\ref{thm:forest:Perfect}.
	
	Given an instance $(X, \calS)$ of X3C, with $X=\{x_1,x_2,\dots, x_{3r}\}$ and $\calS = \{S_1,\dots, S_s\}$, we build the following instance.
	For each $S\in\calS$, construct a path $P_S$ on three vertices $v_{S,1}, v_{S,2}, v_{S,3}$.
	We construct the graph $G$ by concatenating the paths $P_{S_1}, \dots, P_{S_s}$ in that order.
	For each $x\in X$, we write $V_x = \{ v_{S,j} \in V : S^j = x \}$. We say that $v,v'\in V_x$ are \emph{consecutive} if $v \neq v'$ and there is no $v'' \in V_x$ that appears in between $v$ and $v'$ on the path $G$.
	
	For each element $x \in X$, we introduce an agent $i_x$ 
	whose most-preferred bundles are those that contain some item $v\in V_x$.
	We also introduce $s-r$ dummy agents $d_1,\dots,d_{s-r}$ whose most-preferred bundles are those which completely contain at least one path $P_S$.
	These preferences can be implemented by 2-additive valuation functions, where
	\begin{align*}
	&w_{i_x}(\{v\}) = 1 &&\text{ for $v\in V_x$}, \\
	&w_{i_x}(\{v, v'\}) = -1 &&\text{ for consecutive $v,v'\in V_x$}, \\
	&w_{d_k}(\{v_{S_j,1}, v_{S_j,3}\}) = 1 &&\text{ for $j \in [s]$,} \\
	&w_{d_k}(\{v_{S_j,1}, v_{S_{j+1},3}\}) = -1 &&\text{ for $j \in [s-1]$}.
	\end{align*}
	Unless we explicitly specify, each agent has zero value for $X \in \calB(V)$.
	We now show that these 2-additive valuation functions correctly implement the above statements about most-preferred bundles.
	For agent $i_x$, if $X\in\calC(V)$ is a bundle containing no item $v\in V_x$, then $u_{i_x}(X) = 0$. On the other hand, if $|X \cap V_x| = q \ge 1$, then $X$ contains $q-1$ pairs of consecutive members of $V_x$ (because $X$ is connected on the path $G$), and hence $u_{i_x}(X) = q - (q-1) = 1$.
	For agent $d_k$, let $X\in\calC(V)$ be a bundle. Note that $P_S \subseteq X$ iff $\{v_{S,1}, v_{S,3}\} \subseteq X$ by connectedness of $X$. Thus, if $X \not\supseteq P_S$ for every $S\in\calS$, then $u_i(X) = 0$. On the other hand, if $X$ contains $k$ paths $P_{S_{j}}, \dots, P_{S_{j + k -1}}$, then $u_i(X) = k - (k-1) = 1$.
	
	Now, suppose we have an algorithm $\calA$ which, given 2-additive valuations, can find a Pareto-optimum on a path. We show that $\calA$ can be used to decide X3C. 
	Run $\calA$ on our allocation instance constructed above to obtain a Pareto-optimal allocation $\pi$. 
	We claim that the X3C instance has a solution if and only if
	\begin{equation}
	\label{eq:path-general-perfect}
	\begin{aligned}
	u_{i_x}(\pi(i_x)) &= 1 \text{ for all $x\in X$ and} \\
	u_{d_k}(\pi(d_k)) &= 1 \text{ for all $k \in [s-r]$}.
	\end{aligned}
	\end{equation}
	
	If \eqref{eq:path-general-perfect} holds, then each bundle $\pi(d_k)$ contains at least one path $P_S$ and hence a total of $s-r$ paths are allocated to agents $d_1,\dots,d_{s-r}$. Hence the items $R = \bigcup_{x\in X} \pi(i_x)$ allocated to the remaining agents are contained within at most $r$ paths corresponding to at most $r$ sets in $\calS$. But by \eqref{eq:path-general-perfect}, $R\cap V_x \neq \emptyset$ for each $x\in X$, hence there is a collection of at most $r$ sets from $\calS$ that cover $X$, and so the X3C instance has a solution.
	
	Suppose $\calS' \subseteq \calS$ is a solution to the X3C instance.
	Define the allocation $\pi^*$ like in Theorem~\ref{thm:forest:Perfect}.
	Then $\pi^*$ satisfies \eqref{eq:path-general-perfect}. 
	Now, if $\pi$ does not satisfy  \eqref{eq:path-general-perfect}, then $\pi^*$ would Pareto-dominate $\pi$, a contradiction. 
	Hence $\pi$ satisfies  \eqref{eq:path-general-perfect}.
	
	The same proof works for dichotomous valuations, since the specification about agents' most-preferred bundles can be specified by propositional logic, where for each $x\in X$, agent $i_x$'s goal formula is $\bigvee_{S,j : S^j = x} v_{S,j}$, and for each $k\in [s-r]$, agent $d_k$'s goal formula is $\bigvee_{s\in [s]} (v_{S_j, 1} \land v_{S_j, 2} \land v_{S_j, 3})$.
\end{proof}

\begin{proof}[Proof of Theorem~\ref{thm:PO:hardness:path-star} for a star]
We give a Turing reduction from {\sc Vertex Cover}, which asks whether given an undirected graph $H=(W,E)$ and positive integer $k$, there is a {\em vertex cover}, i.e., a subset $W'\subseteq W$ of $k$ vertices such that for each edge $\{w_1,w_2\}\in E$, either $w_1\in W'$ or $w_2\in W'$ \citep{gj}.

Given an instance $(H,k)$ of {\sc Vertex Cover}, we take an instance with item set $V = W \cup \{c\}$, an underlying graph $G$ which is a star with center $c$, and the agent set consists of an agent $i$ plus dummy agents $d_1,\dots,d_{|W|-k}$. Agents' preferences are specified so that $i$'s most-preferred bundles are exactly those that contain a vertex cover of $G$, and so that, for each $j\in[|V|-k]$, $d_j$'s most-preferred bundles are exactly the non-empty ones. Such preferences can be implemented by 2-additive valuation functions, where
\begin{align*}
	&w_i(\{w\}) = \text{degree}_H(w) &&\text{ for $w\in W$}, \\
	&w_i(\{w_1,w_2\}) = -1 &&\text{ for $\{w_1,w_2\} \in E$}, \\
	&w_{d_j}(\{o\}) = 1 &&\text{ for $o\in W \cup \{c\}$}, \\
	&w_{d_j}(\{c,w\}) = -1 &&\text{ for $w \in W$}. 
\end{align*}
Unless we explicitly specify, each agent has zero value for $X \in \calB(V)$.
Note that, for a bundle $X\in\calC(V)$, we have $u_i(X) = \sum_{w \in W \cap X} \text{degree}_H(w) - |\{ e \in E : e \subseteq X \}| = |\{ e \in E : e \cap X \neq \emptyset \}|$, that is, $u_i(X)$ is the number of edges covered by the vertices $W\cap X$. Hence, $u_i(X) \le |E|$, with equality iff $X$ contains a vertex cover of $H$. Further, whenever $X\in\calC(V)$ is nonempty, since $G$ is a star, either $X$ is a singleton and then $u_{d_j}(X) = 1$, or $X$ contains $c$ and $u_{d_j}(X) = |X| - |X\setminus \{c\}| = 1$. Hence, these 2-additive valuation functions correctly implement the above statements about most-preferred bundles.

Now, suppose we have an algorithm $\calA$ which, given 2-additive valuations, can find a Pareto-optimum on a star. We show that $\calA$ can be used to decide \textsc{Vertex Cover}. Run $\calA$ on our allocation instance constructed above to obtain a Pareto-optimal allocation $\pi$. We claim that $H$ has a vertex cover of size $k$ if and only if
\begin{equation}
\label{eq:star-general-perfect}
\begin{aligned}
u_{i}(\pi(i)) &= |E| \text{ and} \\
u_{d_j}(\pi(d_j)) &= 1 \text{ for all $j \in [|V|-k]$}.
\end{aligned}
\end{equation}

If \eqref{eq:star-general-perfect} holds, then $W' = \pi(i) \cap W$ is a vertex cover of $H$. We may assume that $|W'| \ge 2$, and hence $c\in\pi(i)$. 
Thus, by connectivity in the star $G$, the bundles $\pi(d_j)$ must be empty or singletons for each $j \in [|V|-k]$.
But by \eqref{eq:star-general-perfect}, the bundles $\pi(d_j)$ are all non-empty. Thus, $|W'| \le k$, and hence there exists a vertex cover of $H$ with size $k$.

Suppose $W'\subseteq W$ is a vertex cover of $H$ with $|W'| = k$.
Define the allocation $\pi^*$ where $\pi^*(i) = W' \cup \{c\}$ and $\pi(d_j)$ is a singleton leaf from $W \setminus W'$ for each $j \in [|V|-k]$.
Then $\pi^*$ satisfies \eqref{eq:star-general-perfect}. 
If $\pi$ does not satisfy  \eqref{eq:star-general-perfect}, then $\pi^*$ would Pareto-dominate $\pi$, a contradiction. 
Hence $\pi$ satisfies  \eqref{eq:star-general-perfect}.

The same proof works for dichotomous valuations, since the specification about agents' most-preferred bundles can be specified by propositional logic, where $i$'s goal formula is $\bigwedge_{\{w_1,w_2\} \in E} (w_1 \lor w_2)$ and $d_j$'s goal formula is $c \lor \bigvee_{w\in W} w$ for each $j\in [|W|-k]$. 
\end{proof}

\section{Pareto-Optimality \& EF1 on Paths}
In Section~\ref{sec:pareto}, we were aiming to find \emph{some} Pareto-optimum, and obtained a positive result for the important case where $G$ is a path. Now we aim higher, wanting to find an efficient allocation which is also fair, where by fairness we mean EF1.

When there are no connectivity requirements, it is known that efficiency and fairness are compatible: \citet{CKM+16a} showed that an allocation maximizing the \textit{Nash product} of agents' valuations is both Pareto-optimal and EF1. While it is NP-hard to compute the Nash solution,  \citet{Barman2018} designed a (pseudo-)polynomial-time algorithm which finds an allocation satisfying these two properties. 

In our model, unfortunately, EF1 is incompatible with Pareto-optimality, even when $G$ is a path. The following examples only require binary additive valuations and at most four agents. 
Note that \citet{Bilo2018} proved that an EF1 allocation always exists on a path for up to four agents.

\begin{example}\label{ex:POEF1}
\upshape
Consider an instance with four agents $a_1,a_2,a_3, b$ and a path with ten items $v_1,\dots,v_{10}$, and binary additive valuations as shown below.
	
	
	
\begin{center}
{\setlength{\tabcolsep}{6.4pt}
\begin{tabular}{rcccccccccc}
	\toprule
	&
	\multicolumn{10}{l}{\!\!\!
		\begin{tikzpicture}[scale=0.51, transform shape, every node/.style={minimum size=8mm, inner sep=1.2pt, font=\huge}]	
		\node[draw, circle](2)  at (1.2,0) {$v_1$};
		\node[draw, circle](3)  at (2.4,0) {$v_2$};
		\node[draw, circle](4)  at (3.6,0) {$v_3$};
		\node[draw, circle](5)  at (4.8,0) {$v_4$};
		\node[draw, circle](6)  at (6,0) {$v_5$};
		\node[draw, circle](7)  at (7.2,0) {$v_6$};
		\node[draw, circle](8) at (8.4,0) {$v_7$};
		\node[draw, circle](9) at (9.6,0) {$v_8$};
		\node[draw, circle](10) at (10.8,0) {$v_9$};
		\node[draw, circle, inner sep=0.4pt](11) at (12.0,0) {$v_{10}$};
		\draw[-, >=latex,thick] (2)--(3) (3)--(4) (4)--(5) (6)--(5) (6)--(7) (7)--(8) (8)--(9) (9)--(10) (10)--(11);	
		\end{tikzpicture}\!\!\!\!
		\vspace{-2pt}
	} \\
	\midrule
	$a_1, a_2,a_3:$\!\! &  1 &  1 & 1 & 1 & 0  & 0 &  1 & 1 &  1& 1 \\
	$b:$\!\! & 0  & 0 & 0 & 0   & 1 &1 & 0 & 0  & 0  & 0  \\
	\bottomrule
\end{tabular}}
\end{center}
\smallskip

\noindent
Suppose $\pi$ is a Pareto-optimal EF1 allocation.
Then, for each $i=1,2,3$, because $b$ does not envy $a_i$ up to one good, we have $\{v_5,v_6\} \not\subseteq \pi(a_i)$. 
Thus, for each $i=1,2,3$, either $\pi(a_i) \subseteq \{v_1,\dots,v_5\}$ (and we say $a_i$ is in group~L) or $\pi(a_i) \subseteq \{v_6,\dots,v_{10}\}$ (and $a_i$ is in group R). Now, $a_1,a_2,a_3$ are not all in group L, since then one of them (say~$a_1$) would receive at most 1 approved item, and there would be a Pareto-improvement by giving the four items $\{ v_7, \dots, v_{10} \}$ to $a_1$. Similarly, $a_1,a_2,a_3$ are not all in group~R. Hence, wlog, two agents (say~$a_1,a_2$) are in group~L and one agent (say~$a_3$) is in group~R.
Since $\pi$ is Pareto-optimal, we have $\pi(b) \subseteq \{v_5, v_6\}$; if $b$ were to obtain any other items (which $b$ does not approve but every other agent does approve), then we can reallocate these items to obtain a Pareto-improvement.
Thus, $a_3$ obtains four approved items, but one of $a_1$ or $a_2$ obtains at most two approved items, so $\pi$ is not EF1, a contradiction.
\qed
\end{example}

The following alternative example shows that Pareto-optimality and EF1 conflict in an even more restricted setting, where each agent's approval set is an interval.

\begin{example}\label{ex2:POEF1}
\upshape
Consider an instance with three agents $a_1,a_2$, and $b$, and a path with eleven items $v_1,\dots,v_{11}$, and binary additive valuations as shown below.

\begin{center}
{ \setlength{\tabcolsep}{6.4pt}
	\begin{tabular}{rccccccccccc}
		\toprule
		&
		\multicolumn{11}{l}{\!\!\!
			\begin{tikzpicture}[scale=0.51, transform shape, every node/.style={minimum size=8mm, inner sep=1.2pt, font=\huge}]	
			\node[draw, circle](2)  at (1.2,0) {$v_1$};
			\node[draw, circle](3)  at (2.4,0) {$v_2$};
			\node[draw, circle](4)  at (3.6,0) {$v_3$};
			\node[draw, circle](5)  at (4.8,0) {$v_4$};
			\node[draw, circle](6)  at (6,0) {$v_5$};
			\node[draw, circle](7)  at (7.2,0) {$v_6$};
			\node[draw, circle](8) at (8.4,0) {$v_7$};
			\node[draw, circle](9) at (9.6,0) {$v_8$};
			\node[draw, circle](10) at (10.8,0) {$v_9$};
			\node[draw, circle, inner sep=0.4pt](11) at (12.0,0) {$v_{10}$};
			\node[draw, circle, inner sep=0.4pt](12) at (13.2,0) {$v_{11}$};
			\draw[-, >=latex,thick] (2)--(3) (3)--(4) (4)--(5) (6)--(5) (6)--(7) (7)--(8) (8)--(9) (9)--(10) (10)--(11) (11)--(12);	
			\end{tikzpicture}\!\!\!\!
			\vspace{-2pt}
		} \\
		\midrule
		$a_1, a_2:$\!\! &  1 &  1 & 1 & 1 & 1 & 1 &  1 & 1 & 1 & 1 & 1 \\
		$b:$\!\! & 0  & 0 & 0 & 1 &1 & 0 & 0 & 0 & 0 & 0 & 0  \\
		\bottomrule
\end{tabular}}
\end{center}
\smallskip

\noindent
Suppose $\pi$ is a Pareto-optimal EF1 allocation.
Then, for each $i=1,2$, because $b$ does not envy $a_i$ up to one good, we have $\{v_4,v_5\} \not\subseteq \pi(a_i)$. 
Thus, for each $i=1,2$, we have either $\pi(a_i) \subseteq \{v_1,\dots,v_4\}$ (and $a_i$ is in group L) or $\pi(a_i) \subseteq \{v_5,\dots,v_{11}\}$ (and $a_i$ is in group R). 
Now, $a_1$ and $a_2$ are not both in group L, since then there would be a Pareto-improvement by giving the six items $\{ v_6, \dots, v_{11} \}$ to $a_1$. Also, $a_1$ and $a_2$ are not both in group R, since then one of them (say $a_1$) would receive at most 3 approved items, and there would be a Pareto-improvement by giving items $\{ v_1, v_2, v_3 \}$ to $a_1$ and $\{ v_6, \dots, v_{11} \}$ to $a_2$.
Hence, wlog, $a_1$ is in group~L and $a_2$ is in group~R.
Since $\pi$ is Pareto-optimal, we have $\pi(b) \subseteq \{v_4, v_5\}$; if $b$ were to obtain any other items, then we can reallocate these items to $a_1$ and $a_2$ to obtain a Pareto-improvement.
Thus, $a_1$ obtains at most four approved items (since $\pi(a_1) \subseteq \{v_1,\dots,v_4\}$), but $a_2$ receives at least six approved items (since $\{v_6,\dots,v_{11}\} \subseteq \pi(a_2)$), so $\pi$ is not EF1, a contradiction.
\qed
\end{example}

Given that we do not have an existence guarantee, a natural question is whether it is easy to decide whether a given instance admits a Pareto-optimal allocation satisfying EF1. Using the above examples, we prove that the problem is NP-hard. The obvious complexity upper bound is $\Sigma_2^p$; an open problem is whether the problem is complete for this class. A related result of \citet{deKeijzer2009} shows that without connectivity constraints and with additive valuations, it is $\Sigma_2^p$-complete to decide whether a Pareto optimal and envy-free allocation exists; see also \citet{Bouveret2008}.

\begin{theorem}\label{thm:POEF1:hardness:path}
	It is \textup{NP}-hard to decide whether a Pareto-optimal EF1 connected allocation exists when $G$ is a path, even if valuations are binary and additive.
\end{theorem}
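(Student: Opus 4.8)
The plan is to give a Turing reduction from X3C, combining the gadget of Theorem~\ref{thm:forest:Perfect} (which forces the allocation to encode a candidate exact cover) with a fairness-blocking gadget based on Example~\ref{ex:POEF1} or Example~\ref{ex2:POEF1} (which makes EF1 impossible exactly when no exact cover exists). Concretely, given an X3C instance $(X,\calS)$ with $|X| = 3r$, I would build a single path $G$ consisting of two regions joined end-to-end: an \emph{X3C region} formed by concatenating the length-3 paths $P_{S_1}, \dots, P_{S_s}$ as in the proof of Theorem~\ref{thm:PO:hardness:path-star} (but using binary additive valuations as in Theorem~\ref{thm:forest:Perfect}: an agent $i_x$ for each $x\in X$ approving exactly the copies of $x$, and $s-r$ dummy agents approving every item in this region), and an \emph{EF1-trap region} that is a copy of the instance in Example~\ref{ex:POEF1} (four agents, ten items). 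The key design choice is to calibrate the number of approved items available to the X3C agents so that: if the X3C instance is a `yes'-instance, then in the X3C region each $i_x$ can get exactly one approved item and each dummy exactly one full path $P_S$, freeing up enough ``slack'' to let the trap-region agents be allocated in the unique PO-and-EF1 way; whereas if it is a `no'-instance, then any PO allocation must take an approved item away from some trap-region agent (or push a trap-region agent into the X3C region), which — by the same case analysis as in Example~\ref{ex:POEF1} — destroys EF1.

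The steps, in order, would be: (1) describe $G$ and the valuation profile precisely, being careful that only trap-region agents approve trap-region items and only X3C agents approve X3C-region items, so the two regions interact only through the scarcity of items; (2) show that if $(X,\calS)$ has an exact cover $\calS'$, then the allocation that assigns each $i_x$ a single approved copy of $x$ from some $S\in\calS'$, assigns the $s-r$ dummies the full paths $P_S$ for $S\notin\calS'$, and allocates the trap region exactly as in the (presumably achievable) PO+EF1 allocation of Example~\ref{ex:POEF1}'s underlying instance — wait, Example~\ref{ex:POEF1} has \emph{no} PO+EF1 allocation, so the trap must instead be a variant with a unique PO+EF1 allocation that becomes infeasible only under extra scarcity. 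So step (2) really is: design the trap so it has a unique PO+EF1 allocation in isolation, and verify the combined allocation is connected, PO, and EF1; (3) conversely, show that if a PO+EF1 allocation $\pi$ of the combined instance exists, then restricting attention to the trap region and using the Example-style argument forces the trap agents to receive their full ``fair'' bundles, which in turn forces every X3C-region dummy to get a complete path and every $i_x$ to get an approved item — i.e.\ condition~\eqref{eq:path-perfect}-style equalities hold — yielding an exact cover via the argument of Theorem~\ref{thm:forest:Perfect}; (4) conclude that deciding existence of a PO+EF1 allocation solves X3C, and note the instance uses only binary additive valuations on a path.

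I expect the main obstacle to be step~(3): correctly engineering the ``coupling'' so that a failure of the X3C condition \emph{provably} forces a violation of EF1 in the trap region, rather than merely in the X3C region (where EF1 among the $i_x$'s and dummies might be satisfiable even for a `no'-instance, since those agents have very different valuations). The fix is to make the trap region's feasibility depend on a \emph{global} count — e.g.\ the trap's unique PO+EF1 allocation requires a certain number of items total, or requires that no trap agent's bundle extend into the X3C region, and conversely a `no'-instance of X3C forces (by PO) some item to migrate between regions or some trap agent to be starved. Getting the arithmetic of approved-item counts exactly right, and checking that no ``mixed'' allocation (a trap agent grabbing X3C items, or vice versa) can sneak past both PO and EF1, is the delicate part; I would verify it by the same exhaustive group-L/group-R case analysis used in Examples~\ref{ex:POEF1} and~\ref{ex2:POEF1}, extended to account for the X3C region. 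A secondary point to handle carefully is that PO is only invoked to prune allocations, so I must ensure every allocation I claim is a Pareto-improvement genuinely is connected on the path $G$.
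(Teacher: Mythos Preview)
Your high-level plan (reduce from X3C, use the Examples~\ref{ex:POEF1}/\ref{ex2:POEF1} instances as ``traps'') matches the paper, and you correctly put your finger on the crux: step~(3), the coupling between the X3C region and the trap. But the coupling mechanism you propose --- a single trap joined to the X3C region, with disjoint approval sets, and interaction only through ``scarcity'' or a ``global count'' --- does not work. If no trap agent approves any X3C item and vice versa, then any PO+EF1 allocation of the trap region together with any PO+EF1 allocation of the X3C region yields a PO+EF1 allocation of the whole instance (cross-region envy is zero, and there is no Pareto-improvement that moves items between regions since the receiving agent would not value them). So the existence question decouples, and whether the X3C region alone admits a PO+EF1 allocation has no obvious connection to whether $(X,\calS)$ has an exact cover. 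Your suggested fixes (``requires a certain number of items total'', ``forces some item to migrate'') cannot be realised with disjoint binary valuations: PO never forces an item to an agent who values it at~0.

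The idea you are missing, and which the paper uses, is to \emph{share agents} rather than items between the regions. Concretely, the paper attaches a \emph{separate} trap to each X3C agent: for every $x\in X$, agent $i_x$ plays the role of the $b$-agent in its own private copy of Example~\ref{ex2:POEF1}; for every $k\in[s-r]$, agent $d_k$ plays the role of an $a$-agent in its own private copy of Example~\ref{ex:POEF1}. These traps are \emph{deliberately} the no-instance examples: if the shared agent's bundle lies inside its trap, the trap has all its agents present and admits no PO+EF1 allocation. Hence in any PO+EF1 allocation of the whole instance the shared agent's bundle must lie in the X3C region. Then EF1 against the remaining trap agents (who get, respectively, a 5- or 6-item block of $P_x$, or a 4-item block of $P_{d_k}$) forces $u_{i_x}(\pi(i_x))\ge 1$ and $u_{d_k}(\pi(d_k))\ge 3$, which is exactly condition~\eqref{eq:path-perfect}. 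Two further ingredients you would also need: (i) separator gadgets --- two-vertex paths each approved only by a dedicated dummy agent --- inserted between consecutive pieces, so that EF1 prevents any non-dummy agent's bundle from spanning two pieces once everything is concatenated into a single path; and (ii) many traps, one per X3C agent, not a single shared trap.
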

\begin{proof}
Given an instance $(X, \calS)$ of X3C where $X=\{x_1,x_2,\ldots,x_{3r}\}$ and $\calS=\{S_1,S_2,\ldots,S_s\}$, we again create the same instance $I$ as in the proof of Theorem \ref{thm:forest:Perfect}. Namely, we make a path $P_S =({v}_{S,1}, v_{S,2}, {v}_{S,3})$ for each $S  \in  \calS$, and construct agent $i_x$ for each $x \in X$ and agents ${d}_1,d_2,\dots,{d}_{s-r}$ with the same binary valuations:
\begin{itemize}
\item $u_{i_x}(v) = 1$ iff $v=v_{S,j}$ and $S^j = x$;
\item $u_{{d}_k}(v) = 1$ iff $v=v_{S,j}$ for some $S,j$. 
\end{itemize}
We write $N_o=\{\, i_x \mid x \in X\,\} \cup \{d_1,d_2,\ldots,d_{s-r}\}$ for the set of the original agents, and $V_o=\bigcup_{S \in \calS}\{v_{S,1},v_{S,2},v_{S,3}\}$ for the set of original vertices. We will create additional items and agents as follows; unless we explicitly specify, each original agent has zero value for the additional items.

{\em $I_{d_k}$ gadget}: We create an empty PO and EF1 instance $I_{d_k}$ for each agent $d_k$ $(k \in [s-r])$. The empty instance $I_{d_k}$ consists of a path $P_{d_k}=(v^k_1,v^k_2,\ldots,v^k_{10})$ of ten vertices as depicted in Example \ref{ex:POEF1}, together with agent ${d_k}$, agents $a^1_k$ and $a^2_k$, and agent $b_k$. Each of $d_k$, $a^1_k$, and $a^2_k$ approves the vertices of the path except for $v^k_5$ and $v^k_6$ while $b_k$ approves $v^k_5$ and $v^k_6$ only. The agents $a^1_k$, $a^2_k$, and $b_k$ do not approve the vertices outside of $P_{d_k}$. See below for these valuations. 

\begin{center}
{
	\setlength{\tabcolsep}{6.6pt}
	\begin{tabular}{rcccccccccc}
		\toprule
		\raisebox{5pt}{$I_{d_k}:$}\!\!&
		\multicolumn{10}{l}{\!\!\!
			\begin{tikzpicture}[scale=0.51, transform shape, every node/.style={minimum size=8mm, inner sep=1.2pt, font=\huge}]	
			\node[draw, circle](2)  at (1.2,0) {$v_1^k$};
			\node[draw, circle](3)  at (2.4,0) {$v_2^k$};
			\node[draw, circle](4)  at (3.6,0) {$v_3^k$};
			\node[draw, circle](5)  at (4.8,0) {$v_4^k$};
			\node[draw, circle](6)  at (6,0) {$v_5^k$};
			\node[draw, circle](7)  at (7.2,0) {$v_6^k$};
			\node[draw, circle](8) at (8.4,0) {$v_7^k$};
			\node[draw, circle](9) at (9.6,0) {$v_8^k$};
			\node[draw, circle](10) at (10.8,0) {$v_9^k$};
			\node[draw, circle, inner sep=0.3pt](11) at (12.0,0) {$v_{10}^k$};
			\draw[-, >=latex,thick] (2)--(3) (3)--(4) (4)--(5) (6)--(5) (6)--(7) (7)--(8) (8)--(9) (9)--(10) (10)--(11);	
			\end{tikzpicture}\!\!\!\!
			\vspace{-2pt}
		} \\
		\midrule
		$d_k, a^1_k, a^2_k:$\!\! & 1 & 1 & 1 & 1 & 0 & 0 & 1 & 1 & 1 & 1 \\
		$b_k:$\!\! & 0 & 0 & 0 & 0 &1 & 1 & 0 & 0 & 0 & 0  \\
		\bottomrule
\end{tabular}}
\end{center}
\smallskip

{\em $I_x$ gadget}: We create an empty PO and EF1 instance $I_x$ for each agent $i_x$ $(x \in X)$. The empty instance $I_x$ consists of a path $P_x=(v^x_1,v^x_2,\ldots,v^x_{11})$ as in Example \ref{ex2:POEF1}, together with agent $i_x$ and agents $a^1_x$ and $a^2_x$. Each $i_x$ approves the vertices $v^x_4$ and $v^x_5$ whereas agents $a^1_x$ and $a^2_x$ approve every vertex on the path but does not approve any other item outside $P_x$. See below for these valuations.

\begin{center}
{
	\setlength{\tabcolsep}{6.5pt}
	\begin{tabular}{rccccccccccc}
		\toprule
		\raisebox{5pt}{$I_x:$}\!\!&
		\multicolumn{11}{l}{\!\!\!
			\begin{tikzpicture}[scale=0.51, transform shape, every node/.style={minimum size=8mm, inner sep=1.2pt, font=\huge}]	
			\node[draw, circle](2)  at (1.2,0) {$v_1^x$};
			\node[draw, circle](3)  at (2.4,0) {$v_2^x$};
			\node[draw, circle](4)  at (3.6,0) {$v_3^x$};
			\node[draw, circle](5)  at (4.8,0) {$v_4^x$};
			\node[draw, circle](6)  at (6,0) {$v_5^x$};
			\node[draw, circle](7)  at (7.2,0) {$v_6^x$};
			\node[draw, circle](8) at (8.4,0) {$v_7^x$};
			\node[draw, circle](9) at (9.6,0) {$v_8^x$};
			\node[draw, circle](10) at (10.8,0) {$v_9^x$};
			\node[draw, circle, inner sep=0.3pt](11) at (12.0,0) {$v_{10}^x$};
			\node[draw, circle, inner sep=0.3pt](12) at (13.2,0) {$v_{11}^x$};
			\draw[-, >=latex,thick] (2)--(3) (3)--(4) (4)--(5) (6)--(5) (6)--(7) (7)--(8) (8)--(9) (9)--(10) (10)--(11) (11)--(12);	
			\end{tikzpicture}\!\!\!\!
			\vspace{-2pt}
		} \\
		\midrule
		$a^1_x, a^2_x:$\!\! &  1 &  1 & 1 & 1 & 1 & 1 &  1 & 1 & 1 & 1 & 1 \\
		$i_x:$\!\! & 0  & 0 & 0 & 1 &1 & 0 & 0 & 0 & 0 & 0 & 0  \\
		\bottomrule
\end{tabular}}
\end{center}
\smallskip

{\em Dummies}: Note that the number of connected components of the graph constructed so far is $s+3r+(s-r)=2s+2r$. For each $h \in [2s+2r]$, we create a dummy agent $z_h$ and a path $P_{z_h}=(v^{z_h}_1,v^{z_h}_2)$ of two dummy vertices. 

Define the following three paths by concatenating pieces constructed above:
\begin{itemize}
\item $P_1 :=P_{S_1}P_{z_1}\cdots P_{z_{s-1}}P_{S_s}P_{z_{s}}$,
\item $P_2 := P_{x_1}P_{z_{s+1}}\cdots P_{z_{s+3r-1}}P_{x_{3r}}P_{z_{s+3r}}$,
\item $P_3 := P_{d_1}P_{z_{s+3r+1}}\cdots P_{z_{2s+2r-1}}P_{d_{s-r}}P_{z_{2s+2r}}$.
\end{itemize}
Our final graph $G$ is obtained by concatenating the paths $P_1$, $P_2$, $P_3$ in that order.

Each dummy agent $z_h$ only approves the dummy vertices on the path $P_{z_h}$. Thus in any EF1 allocation, none of the non-dummy agent obtains a bundle containing both of the dummy vertices. 

{\em Correctness}: We will show that there is a Pareto-optimal EF1 connected allocation if and only if there is an exact cover. First, suppose that there is a Pareto-optimal EF1 connected allocation $\pi$. By EF1, each non-dummy agent cannot obtain more than one connected component of the original graph. 
Thus, we may assume that for each agent $i \neq z_h$, the bundle $\pi(i)$ is contained in some of the non-dummy paths in which $i$ has approved vertices, i.e., 
\begin{itemize}
\item for each $x \in X$, $\pi(i_x) \subseteq P_{\alpha}$ for some $\alpha \in \calS(x) \cup \{x\}$ where $\calS(x)$ is the set of $S \in \calS$ containing $x$;
\item for each $k \in [s-r]$, $\pi(d_k) \subseteq P_{\alpha}$ for some $\alpha \in \calS \cup \{d_k\}$;
\item for each $x \in X$ and $j=1,2$, $\pi(a^j_x) \subseteq P_{x}$;
\item for each $k \in [s-r]$ and $j=1,2$, $\pi(a^j_k) \subseteq P_{d_k}$;
\item for each $k \in [s-r]$, $\pi(b_k) \subseteq P_{d_k}$.
\end{itemize}

Now observe that for each $x \in X$, none of the agents except for $i_x$ approve the vertices outside $P_x$. As we saw in Example \ref{ex2:POEF1}, if $\pi(i_x) \subseteq P_x$, and $\pi(a^j_x) \subseteq P_{x}$ for each $j=1,2$, then $\pi$ would not satisfy both Pareto-optimality and EF1. Hence, in the Pareto-optimal and EF1 allocation $\pi$, each agent $i_x$ receives a bundle outside of $P_x$, namely,  $\pi(i_x) \subseteq P_{\alpha}$ for some $\alpha \in \calS(x)$. Also, to achieve EF1 and PO allocation of the path $P_x$ among the agents $a^1_x$ and $a^2_x$, we must allocate the five consecutive vertices of $P_x$ to one of the agents $a^1_x$ and $a^2_x$, and allocate the remaining six vertices of the path to the other. By EF1, it follows that each agent $i_x$ receives a bundle containing at least one item which he approves.

Similarly, for each $k \in [s-r]$, none of the agents $a^1_k$, $a^2_k$, and $b_k$ approves the vertices outside of $P_{d_k}$. Thus, we must have $\pi(d_k) \subseteq P_{\alpha}$ for some $\alpha \in \calS$; otherwise, $\pi$ would not be Pareto-optimal and EF1 as we have seen in Example \ref{ex:POEF1}. Also, by EF1, no agent other than $b_k$ can get both of the middle two vertices $v^k_5$ and $v^k_6$ which $b_k$ approves; thus, by Pareto-optimality, each agent $b_k$ is allocated the middle two vertices $v^k_5$ and $v^k_6$ of $P_{d_k}$. Thus, by Pareto-optimality, one of the agents $a^1_k$ and $a^2_k$ is allocated the first four vertices of $P_{d_k}$ and the other is allocated the last four vertices of $P_{d_k}$. It follows that to bound the envy of $d_k$ up to one item, each agent $d_k$ receives a bundle containing at least three items which he approves. 

Combining the above observations, each of the original agents is allocated to some of the original vertices and the allocation $\pi$ satisfies 
	\begin{equation}
	\label{eq:POEF1}
	\begin{aligned}
	u_{i_x}(\pi(i_x)) &\ge 1 \text{ for all $x\in X$ and } \\
	u_{d_k}(\pi(d_k)) &\ge 3 \text{ for all $k \in [s-r]$}.
	\end{aligned}
	\end{equation}
Hence there is an exact cover as we have seen in the proof of Theorem \ref{thm:forest:Perfect}. 

Conversely, suppose that there is an exact cover. Then, as we proved in the proof of Theorem \ref{thm:forest:Perfect}, there is a perfect allocation $\pi$ of the original instance satisfying the inequalities \ref{eq:POEF1}. We extend this allocation as follows 
\begin{itemize}
\item each dummy agent $z_h$ obtains the associated dummy vertices in $P_{z_h}$; 
\item for each $k \in [s-r]$, $a^1_k$ receives the first four vertices of the path $P_{d_k}$, $a^2_k$ receives the last four vertices, and $b_k$ obtains the two vertices $v^k_5$ and $v^k_6$; and
\item for each $x \in X$, $a^1_x$ the first five vertices of the path $P_x$, and $a^2_x$ obtains the last six vertices. 
\end{itemize}
The resulting allocation is Pareto-optimal since each item is allocated to an agent who approves it; also, it can be easily seen to satisfy EF1. 
\end{proof}

Observe that in the Examples~\ref{ex:POEF1} and~\ref{ex2:POEF1}, there are at least two different types of agents' valuations. One may expect that there are existence guarantees when agents have {\em identical valuations}, i.e., $u_i(X)=u_i(X)$ for all bundles $X \in \calC(V)$ and all $i,j \in N$. Invoking a very recent result independently obtained by \citet{Bilo2018} and \citet{Oh2018}, we can show that this is the case for additive valuations: an EF1 and Pareto-optimal allocation exists on paths for agents with identical additive valuations.

\begin{proposition}
When $G$ is a path and agents have identical additive valuations, a connected allocation that satisfies EF1 and Pareto-optimality exists and can be found efficiently. 
\end{proposition}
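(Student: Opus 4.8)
The plan is to decouple the two requirements, since with identical additive valuations one of them is automatic. Write $u$ for the agents' common valuation function. For \emph{every} connected allocation $\pi$ we have $\sum_{i\in N} u(\pi(i)) = \sum_{i\in N}\sum_{v\in\pi(i)} u(v) = \sum_{v\in V} u(v) = u(V)$, because the bundles $\pi(i)$ partition $V$. So every connected allocation has the same utilitarian social welfare, hence every connected allocation maximizes utilitarian social welfare among connected allocations, and therefore, by the observation recalled in Section~\ref{sec:prem}, every connected allocation is Pareto-optimal. Consequently it is enough to produce \emph{some} EF1 connected allocation; Pareto-optimality then comes for free.

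For the EF1 part I would invoke the recent result of \citet{Bilo2018} and \citet{Oh2018}: whenever all agents have identical (indeed, merely monotonic) valuations, an EF1 allocation connected on a path exists. Together with the previous paragraph, this already gives existence of a Pareto-optimal EF1 allocation. To obtain the efficiency claim as well, I would note that in the identical \emph{additive} case such an allocation is polynomial-time computable --- either by appealing to the constructive nature of the argument of \citet{Bilo2018} in this case, or via an explicit discrete moving-knife procedure: scanning the path from its left endpoint, cut off for the first agent the shortest prefix whose $u$-value is at least as large as the value that the worst-off of the remaining $n-1$ agents would have to tolerate in an EF1 division of the suffix (an appropriate rounding of $u(\text{suffix})/(n-1)$ that allows for the ``up to one good'' slack), then recurse on the suffix with one fewer agent; each step is a single linear scan because $u$ is additive.

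The only substantive point, if one prefers a self-contained argument over the citation, is checking that such a greedy cut is EF1: whenever agent $i$ envies agent $j$, one of the two endpoint items of $\pi(j)$ can be deleted --- keeping $\pi(j)$ connected, as the EF1 definition requires --- so as to bring $u(\pi(j))$ down to at most $u(\pi(i))$. This reduces to showing at once that no earlier agent's block was cut too short and that no later agent's block exceeds $u(\pi(i))$ by more than the value of a single boundary item, which is precisely what the stopping rule enforces and is the step where the indivisibility of high-value items must be handled with care. Since the cited existence result already does exactly this, I would present the proof compactly: the utilitarian-welfare observation shows that \emph{any} EF1 connected allocation is Pareto-optimal, and such an allocation exists by \citet{Bilo2018,Oh2018} and (via the moving-knife procedure, or their construction) can be found in polynomial time.
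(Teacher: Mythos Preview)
Your proof is correct and follows essentially the same approach as the paper: you observe that with identical additive valuations every connected allocation has the same utilitarian social welfare and hence is Pareto-optimal, then invoke \citet{Bilo2018} and \citet{Oh2018} for the existence (and efficient computability) of a connected EF1 allocation. The paper's proof is precisely this two-step argument; your additional sketch of a moving-knife procedure is extra detail not present in the paper but harmless.
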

\begin{proof}
When agents have identical additive valuations, every allocation $\pi$ has the same utilitarian social welfare $\sum_{i \in N} u_i(\pi(i)) = \sum_{v\in V} u_1(v)$. Hence, every allocation maximises social welfare and is thus Pareto-optimal. Now, \citet[Theorem~7.1]{Bilo2018} and \citet[Lemma~C.2]{Oh2018} show that if $G$ is a path, a connected EF1 allocation exists, which, by the above reasoning, is also Pareto-optimal. 
This allocation can be found efficiently since the existence results of \citet{Bilo2018} and \citet{Oh2018} both come with an efficient algorithm for finding an EF1 allocation.
\end{proof}

\noindent
For identical valuations that are not additive, Pareto-optimality and EF1 are again incompatible on a path. The following example uses two agents and four items, and subadditive valuations.
\begin{example}[PO and EF1 may be incompatible for identical but non-additive valuations]
We give an example with identical non-additive valuations, for which no Pareto-optimal allocation is EF1. There are four items $a,b,c,d$ arranged on a path, and two agents with the following valuations:
\begin{center}
\begin{tabular}{rccrc}
	\toprule
	$X$ & $u(X)$ &\:& $X$ & $u(X)$ \\
	\midrule
	$\emptyset$ & $0$ && $\{a,b\}$ & $2$ \\
	$\{a\}$ & $2$ && $\{b,c\}$ & $3$ \\
	$\{b\}$ & $2$ && $\{c,d\}$ & $3$ \\
	$\{c\}$ & $2$ && $\{a,b,c\}$ & $3$ \\
	$\{d\}$ & $1$ && $\{b,c,d\}$ & $4$ \\
	& && $\{a,b,c,d\}$ & $4$ \\
	\bottomrule
\end{tabular}
\end{center}
Then:
\begin{itemize}
	\item allocation $\{ \{a,b,c,d\} \}$ is not EF1;
	\item allocation $\{ \{a,b,c\}, \{d \} \}$ is not EF1;
	\item allocation $\{ \{a,b\}, \{c, d \} \}$ is Pareto-dominated by $\{ \{a\}, \{b,c,d \} \}$;
	\item allocation $\{ \{a\}, \{b,c,d \} \}$ is not EF1.
\end{itemize}
One can check that these valuations are subadditive.
\qed
\end{example}

\section{Pareto-Optimality \& MMS on Paths}
In the previous section, we saw that deciding the existence of an allocation that is Pareto-efficient and satisfies EF1 is computationally hard, even if $G$ is a path, and there are examples where no such allocation exists. Part of the reason is that envy-freeness notions and Pareto-optimality are not natural companions: it is easy to construct examples where some allocation is envy-free, yet by Pareto-improving the allocation, we introduce envy.

An alternative notion of fairness avoids this problem: Pareto-improving upon an MMS allocation preserves the MMS property, because MMS only specifies a lower bound on agents' utilities.
\citet{Bouveret2017} showed that if $G$ is a tree, then an MMS allocation always exists (and can be found efficiently). Hence, if $G$ is a tree, there is an allocation that is both Pareto-optimal and MMS: take an MMS allocation, and repeatedly find Pareto-improvements until reaching a Pareto-optimum, which must still satisfy the MMS property.

While existence is guaranteed, it is unclear whether we can find an allocation satisfying both properties in polynomial time. Certainly, by the negative result of Theorem~\ref{thm:PO:hardness:tree}, this is not possible when $G$ is an arbitrary tree. What about the case when $G$ is a path? The answer is also negative: a Pareto-optimal MMS allocation cannot be found efficiently.

\begin{theorem}\label{thm:POMMS:hardness:path}
	Unless \textup{P = NP}, there is no polynomial-time algorithm which finds a Pareto-optimal MMS allocation when $G$ is a path, even if valuations are binary and additive.
\end{theorem}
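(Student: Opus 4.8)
The plan is to give a Turing reduction from \textsc{Exact-3-Cover} (X3C), in the same spirit as the reduction behind Theorem~\ref{thm:POEF1:hardness:path}. Since an MMS allocation always exists on a path, a many-one reduction is not available; instead we show that a hypothetical polynomial-time algorithm $\calA$ producing a Pareto-optimal MMS allocation could decide X3C. Starting from an instance $(X,\calS)$ I would build the familiar ``core'': a path $P_S$ on three vertices for each $S\in\calS$, a selector agent $i_x$ for each $x\in X$ that approves the copies $v_{S,j}$ with $S^j=x$, and $s-r$ dummy agents $d_k$ that approve all core vertices; the core paths are strung together on a single path, separated by two-vertex dummy paths $P_{z_h}$ with their own dummy agents (exactly as in Theorem~\ref{thm:POEF1:hardness:path}), so that any connected bundle containing three core vertices lies inside a single $P_S$. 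To this I would append, for each $i_x$ and each $d_k$, a gadget path with auxiliary agents, playing the role of the $I_x$ and $I_{d_k}$ gadgets of Theorem~\ref{thm:POEF1:hardness:path} but tuned for MMS rather than EF1.

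Each gadget must do two jobs at once. First, it must be an \emph{empty PO-and-MMS instance}: in any Pareto-optimal MMS allocation of the whole instance, the auxiliary agents of the gadget absorb all of its vertices in essentially the only feasible way, pushing the special agent ($i_x$ or $d_k$) out of its gadget entirely. As in Examples~\ref{ex:POEF1} and~\ref{ex2:POEF1}, the rigidity will come from a mixture of Pareto-optimality --- no vertex should go to an agent who does not approve it when someone else could connectedly receive it --- and the MMS lower bounds of the auxiliary agents. Second, the gadget vertices of $i_x$ (resp.\ $d_k$) must lie in that agent's approval set, and the global vertex/agent counts must be arranged so that $\mms_{i_x}\ge 1$ and $\mms_{d_k}\ge 3$ while the vertices $i_x$ (resp.\ $d_k$) can actually reach inside its own gadget fall short of that bound. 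This forces $i_x$ to draw at least one unit of value from the core --- i.e.\ to receive a copy of its element --- and $d_k$ to receive three core vertices, hence a full path $P_S$.

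Given this, correctness follows the standard pattern. If $\pi$ is the allocation returned by $\calA$, the gadget analysis yields $u_{i_x}(\pi(i_x))\ge 1$ for all $x$ and $u_{d_k}(\pi(d_k))\ge 3$ for all $k$; then, exactly as in the proof of Theorem~\ref{thm:forest:Perfect}, the $s-r$ dummies occupy $s-r$ of the paths $P_S$ and the selectors pick up one copy of each element inside the remaining $r$ paths, so those $r$ sets form an exact cover. Conversely, from an exact cover $\calS'$ one takes the ``perfect'' core allocation of Theorem~\ref{thm:forest:Perfect} and extends it by the canonical Pareto-optimal MMS allocation of each gadget; the result allocates every vertex to an approver (hence is Pareto-optimal) and meets every agent's maximin share. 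As the equations $u_{i_x}(\pi(i_x))=1$, $u_{d_k}(\pi(d_k))=3$ are checkable in polynomial time, this equivalence turns $\calA$ into a polynomial-time decision procedure for X3C. Finally, scaling up the gadgets (extra padding vertices that the special agents approve) makes $\mms_{i_x}$ and $\mms_{d_k}$ arbitrarily large while keeping the ``missing'' unit of value located in the core, so $\alpha\cdot\mms_{i_x}\ge 1$ for any fixed $\alpha>0$; this yields the $\alpha$-MMS hardness as well.

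The delicate point --- and the step I expect to be the main obstacle --- is the simultaneous calibration of the MMS values. Unlike EF1, the maximin share is a global quantity: for binary valuations on a path it equals $\lfloor M_i/n\rfloor$, where $M_i$ is the number of vertices agent $i$ approves and $n$ the total number of agents, so every auxiliary agent added to make a gadget rigid also dilutes everyone's fair share. It is not obvious that one can keep $\mms_{i_x}$ and $\mms_{d_k}$ from collapsing to $0$ while still preventing the special agents from meeting their shares from inside their own gadgets. Making the bookkeeping go through will, in all likelihood, require replicating the whole construction several times and invoking a pigeonhole argument --- as in the proofs of Theorems~\ref{thm:PO:hardness:tree} and~\ref{thm:PO:hardness:max-degree} --- to isolate one ``clean'' copy in which the X3C structure is genuinely forced; getting the resulting arithmetic to balance is where the real work lies.
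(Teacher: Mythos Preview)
Your plan is coherent but takes a substantially harder road than the paper, and you have correctly put your finger on exactly the obstacle that makes your road hard. The paper sidesteps that obstacle entirely.

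The paper does \emph{not} try to push the core agents $i_x$ and $d_k$ out of any gadget, nor does it try to make their maximin shares large. Instead it leaves their valuations exactly as in Theorem~\ref{thm:forest:Perfect} (so their MMS values may well be~$0$, and the MMS constraint on them is vacuous). The only new ingredient is this: between consecutive core paths $P_{S_j}$ and $P_{S_{j+1}}$ the paper inserts a ``separator'' path $B_j$ on exactly $n=2r+2s$ vertices, together with a new agent $z_j$ who approves precisely the vertices of $B_j$. Because $z_j$ approves $n$ vertices, $\mms_{z_j}=1$; hence in any MMS allocation $z_j$ must receive at least one vertex of $B_j$. That single vertex disconnects the core: no agent in $N_o$ can hold items from two different $P_S$. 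From there the argument is pure Pareto-optimality, identical to Theorem~\ref{thm:forest:Perfect}. The $\alpha$-MMS extension is immediate, since $\alpha\cdot\mms_{z_j}>0$ still forces $z_j$ to take an integer-valued, hence $\ge 1$, piece of $B_j$.

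So the comparison is: you use MMS \emph{aggressively} (on $i_x$ and $d_k$) to force the target utilities directly, and then pay for it with the global-calibration headache you describe; the paper uses MMS \emph{minimally} (only on freshly introduced separator agents whose share is trivially $1$) to reduce the path to a disjoint union of $P_S$'s, and lets Pareto-optimality do all the remaining work. Your route might be completable with enough replication and bookkeeping, but the paper's trick --- make the MMS constraint bite only on agents you fully control --- is the idea you are missing, and it removes the ``delicate point'' altogether.
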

\begin{proof}
	We again give a Turing reduction from X3C, building on the reduction of Theorem~\ref{thm:forest:Perfect}.
	Suppose we are given an instance $(X,\calS)$ of X3C, where $X=\{x_1,x_2,\dots, x_{3r}\}$ and $\calS = \{S_1,\dots, S_s\}$. Construct the paths $P_{S_1},P_{S_2},\dots, P_{S_s}$ and agents $i_x$ for each $x\in X$ and $d_k$ for each $k \in [s-r]$ with binary utilities like in the proof of Theorem~\ref{thm:forest:Perfect}.
	We write $N_{o}=\{\, i_x :  x \in X\,\} \cup \{d_1,d_2,\ldots,d_{s-r}\}$ and $V_o = \bigcup_{S \in \calS} \{ v_{S,1}, v_{S,2}, v_{S,3} \}$ for the sets of agents and items introduced so far.
	
	In addition, for each $k \in [s]$, we construct a path $B_k$ of $2r+2s$ new vertices $b_k^1,b_k^2, \dots, b_k^{2r+2s}$. 
	The graph $G$ for our problem is obtained by concatenating these paths in the order $P_{1},B_{1},\dots,P_{s},B_{s}$.
	Finally, for each $k \in [s]$, we introduce an agent $z_k$ who approves exactly the vertices on $B_k$.
	The agents in $N_o$ do not approve any of the items in $B_1,\dots,B_s$.
	
	Note that, in total, there are $3r + (s-r) + s = 2r+2s$ agents. Since each agent $z_k$ approves $2r+2s$ vertices, each agent $z_k$ has positive MMS value, namely $\mms_{z_k} = 1$. 
	
	Suppose we had an algorithm $\calA$ which finds a Pareto-optimal MMS allocation on a path. We show how to use $\calA$ to solve X3C.
	Run $\calA$ on the allocation problem constructed above to obtain a Pareto-optimum $\pi$ which satisfies MMS. 
	Then, for each $k\in [s]$, the agent $z_k$ receives at least one vertex from $B_k$ since $\pi$ is MMS.
	It follows that no agent $i\in N_0$ can receive items from two different paths $P_{S_j}$ and $P_{S_k}$, $j<k$, since these paths are separated by $B_{j}$.
	Thus, for each $i \in N_0$, there is some $j\in [s]$ with $\pi(i) \subseteq B_{j-1} \cup P_{S_j} \cup B_{j}$.
	By suitably reallocating items that agent $i$ does not approve, we can in fact assume that $\pi(i) \subseteq P_{S_j}$ for some $j\in [s]$.
	This implies that $u_{i_x}(\pi(i_x)) \le 1$ for all $x\in X$ and $u_{d_k}(\pi(d_k)) \le 3$ for all $k \in [s-r]$.
	
	We now prove that the X3C instance has a solution iff
	\begin{equation}
	\label{eq:mms-perfect}
	\begin{aligned}
	u_{i_x}(\pi(i_x)) &= 1 \text{ for all $x\in X$ and } \\
	u_{d_k}(\pi(d_k)) &= 3 \text{ for all $k \in [s-r]$}.
	\end{aligned}
	\end{equation}
	Since \eqref{eq:mms-perfect} is easy to check, this equivalence implies that $\calA$ can be used to solve X3C, and hence our problem is NP-hard.
	
	If \eqref{eq:mms-perfect} holds, then the argument in the proof of Theorem~\ref{thm:forest:Perfect} applies and shows that the X3C instance has a solution.
	
	Conversely, suppose there is a solution $\calS' \subseteq \calS$ to the X3C instance.
	Then, as in the proof of Theorem \ref{thm:forest:Perfect}, there is an allocation $\pi^* : N_o \to \calC(V_o)$ of the original items to the original agents such that 
	$u_{i_x}(\pi^*(i_x)) = 1$ for all $x\in X$ and
	$u_{d_k}(\pi^*(d_k)) = 3$ for all $k \in [s-r]$.
	Extend $\pi^*$ to all agents by defining $\pi^*(z_k) = B_k$ for each $k\in[s]$. 
	It is easy to check that $\pi^*$ is a connected allocation. 
	For each $k\in[s]$, we have $u_{z_k}(z_k) = u_{z_k}(\pi(z_k))$, since $z_k$ receives all approved items in $\pi^*$
	Also, for each original agent $i\in N_o$, we have $u_i(\pi^*(i)) \ge u_i(\pi(i))$, since $i$ obtains an optimal bundle under $\pi^*$.
	It follows that if $\pi$ fails \eqref{eq:mms-perfect}, then $\pi^*$ is a Pareto-improvement of $\pi$, contradicting that $\pi$ is Pareto-optimal. So $\pi$ satisfies \eqref{eq:mms-perfect}.
\end{proof}

For $\alpha \in (0,1]$, we say that an allocation $\pi$ is $\alpha$-MMS if $u_i(\pi(i)) \ge \alpha\cdot \mms_i$ for all $i\in N$. 
The above proof implies that we cannot in polynomial time find a Pareto-optimal allocation that is $\alpha$-MMS, for any $\alpha > 0$.
The reduction can also easily be adapted to the case when $G$ is a cycle.

Next, we show that when $G$ is a path, we can find a Pareto-optimal MMS allocation in polynomial time for a restricted class of valuations. We assume that agents' valuations are binary and additive, and for each voter, the set of approved vertices forms an interval of the path $G$, and finally these intervals are {\em non-nested}. Formally, given binary and additive valuations and an agent $i \in N$, we let $A(i) = \{v \in V : u_i(v) = 1\}$ be the set of vertices which $i$ approves. For a path $P=(1,2,\ldots,m)$, we say that binary valuations are {\em non-nested} on the path if for each $i \in N$, $A(i)$ is connected on the path, and there is no pair of agents $i,j \in N$ with $\min A(i) < \min A(j)$ and $\max A(j)< \max A(i)$. The corresponding restriction captures, for instance, when several groups wish to book the same conference venue; each group specifies a period of contiguous dates of (almost) equal length that are suitable for them. We show that when valuations have this form, there is a polynomial-time algorithm which yields a Pareto-optimal MMS allocation. The algorithm is an adaptation of the moving-knife algorithm of \citet{Bouveret2017}.

We first observe that if agent $i$'s approval interval appears before $j$' approval interval and these intervals intersect with each other, then the value of the left bundle exceeds $i$'s maximin fair share before it exceeds $j$'s maximin fair share. For $j,k\in [m]$ with $j\leq k$, we write $[j,k]=\{j,{j+1},\ldots,{k}\}$.  

\begin{lemma}\label{lem:interval}
Suppose that $G$ is a path $P=(1,2,\ldots,m)$ and valuations are binary and additive. Let $i,j \in N$ be a pair of distinct agents such that $A(i)$ and $A(j)$ are connected on the path, $\min A(i) \leq \min A(j)$, and $\max A(i) \leq \max A(j)$. If agent $j$ values $[1,x]$ at least as highly as her maximin fair share and $\min A(j) \leq x+1$, then $i$ values $[1,x]$ at least as highly as her maximin fair share.  
\end{lemma}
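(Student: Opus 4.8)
The plan is to reduce everything to the elementary closed form for the maximin fair share of an agent whose approval set is an interval. Write $\ell_i=\min A(i)$, $r_i=\max A(i)$, $a_i=|A(i)|$, and similarly $\ell_j,r_j,a_j$ for $j$; since $A(i)$ and $A(j)$ are connected on the path we have $a_i=r_i-\ell_i+1$ and $a_j=r_j-\ell_j+1$, and we may assume both approval sets are nonempty (otherwise the conclusion is immediate). I would first record two one-sided bounds. The upper bound $\mms_i\le\lfloor a_i/n\rfloor$ holds for any agent: in any connected partition into $n$ pieces the agent's values sum to $a_i$, so the cheapest piece is worth at most $a_i/n$, and $\mms_i$ is an integer. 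The lower bound $\mms_j\ge\lfloor a_j/n\rfloor$ uses that $A(j)$ is an interval: chop $A(j)$ into $n$ consecutive blocks each of size $\lfloor a_j/n\rfloor$ or $\lceil a_j/n\rceil$, take these as the $n$ pieces, and absorb the prefix $[1,\ell_j-1]$ and the suffix $[r_j+1,m]$ of the path into the first and last blocks (if $m<n$ then $\Pi_n(G)=\emptyset$ and there is nothing to prove).

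Next I would note that $u_i([1,x])=|A(i)\cap[1,x]|=\max(0,\min(x,r_i)-\ell_i+1)$, and likewise for $j$, and dispose of the easy case $r_i\le x$: there $A(i)\subseteq[1,x]$, so $u_i([1,x])=a_i\ge\lfloor a_i/n\rfloor\ge\mms_i$, and $j$ is not even needed. So assume $r_i\ge x+1$, which forces $r_j\ge r_i\ge x+1$.

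The main case is $\ell_j\le x$, hence also $\ell_i\le\ell_j\le x$. Then $u_i([1,x])=x-\ell_i+1$ and $u_j([1,x])=x-\ell_j+1$, so writing $\delta:=\ell_j-\ell_i\ge0$ we have $u_i([1,x])=u_j([1,x])+\delta$; also $a_i=r_i-\ell_i+1\le r_j-\ell_i+1=a_j+\delta$. Now chain the bounds: $u_i([1,x])=u_j([1,x])+\delta\ge\mms_j+\delta\ge\lfloor a_j/n\rfloor+\delta\ge\lfloor(a_j+\delta)/n\rfloor\ge\lfloor a_i/n\rfloor\ge\mms_i$. The only nonroutine step is $\lfloor a_j/n\rfloor+\delta\ge\lfloor(a_j+\delta)/n\rfloor$, which after writing $a_j=qn+s$ with $0\le s\le n-1$ reduces to the integer inequality $\lfloor(s+\delta)/n\rfloor\le\delta$. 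The remaining case $\ell_j=x+1$ gives $u_j([1,x])=0$, so $\mms_j=0$ and hence $\lfloor a_j/n\rfloor=0$ by the lower bound, i.e.\ $a_j\le n-1$ and $r_j\le n-1+x$; now $\ell_i\le\ell_j=x+1$, and either $\ell_i=x+1$, in which case $a_i\le a_j\le n-1$ forces $\mms_i=0=u_i([1,x])$, or $\ell_i\le x$, in which case $a_i\le r_j-\ell_i+1\le n+(x-\ell_i)\le n(1+x-\ell_i)=n\cdot u_i([1,x])$, so $\mms_i\le\lfloor a_i/n\rfloor\le u_i([1,x])$.

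The step I expect to be the real obstacle is the main case, and specifically avoiding the natural but wrong shortcut ``$u_i([1,x])\ge u_j([1,x])\ge\mms_j\ge\mms_i$'': this fails because $A(i)$, though ending no later than $A(j)$, may be \emph{longer} than $A(j)$ (when it is shifted left), so $\mms_i$ can exceed $\mms_j$. The resolution is to track the shift $\delta=\ell_j-\ell_i$ exactly: it simultaneously bounds from above the excess length $a_i-a_j$ and the excess value $u_i([1,x])-u_j([1,x])$, and the two effects cancel under the floor. This is also why the argument needs the exact floor form of the MMS values rather than merely the fraction $a_i/n$ — the cancellation has to be tight.
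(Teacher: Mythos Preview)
Your argument is correct. It differs from the paper's in organisation rather than in substance: the paper argues by contradiction, assuming $u_i([1,x])<\mms_i$ and comparing the \emph{right} portions $|A(i)\cap[x{+}1,m]|$ and $|A(j)\cap[x{+}1,m]|$ to force $\max A(i)>\max A(j)$, whereas you make the closed form $\mms_k=\lfloor|A(k)|/n\rfloor$ explicit up front and chain forward using the left-endpoint shift $\delta=\ell_j-\ell_i$. Both proofs rest on the same identity for the MMS of an interval approval set (the paper uses it implicitly in its ``otherwise $\mms$ would be\dots'' steps); your version is more direct and isolates precisely why the non-nested hypothesis matters --- $\delta$ simultaneously bounds $a_i-a_j$ from above and equals $u_i([1,x])-u_j([1,x])$, and the floor inequality $\lfloor a_j/n\rfloor+\delta\ge\lfloor(a_j+\delta)/n\rfloor$ lets these cancel --- while the paper's contradiction argument keeps the geometric picture of the right tails in the foreground. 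Your separate handling of the boundary case $\ell_j=x+1$ is also fine; the paper absorbs it into the main case by observing that both agents approve item $x+1$.
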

\begin{proof}
If the interval $[1,x]$ contains $i$'s approval interval, i.e., $\max A(i) \le x$, then $i$'s value for $[1,x]$ is at least her maximin fair share. Thus suppose that the last vertex of $i$'s approval interval appears after or at $x+1$ (i.e., $x+1 \leq \max A(i)$). Assume for a contradiction that agent $i$ does not value $[1,x]$ at least as highly as her maximin fair share, i.e., $u_i([1,x]) < \mms_i$. Since the $\min A(j)  \leq x+1$, we have $\min A(i) \leq \min A(j) \leq x+1$; also, since $x+1 \leq \max A(i)$, we have that $x+1 \leq \max A(i) \leq \max A(j)$. By the contiguity, it follows that both $i$ and $j$ approves the item $x+1$. Now let $s$ denote the number of vertices approved by $i$ in $[1,x]$. Similarly, let $t$ denote the number of vertices approved by $j$ in $[1,x]$. By the above argument, note that $s \ge t$.
Since $[1,x]$ does not guarantee $i$'s maximin fair share, $i$ has value more than $(n-1)(s+1)$ for $[x+1,m]$, i.e., 
$$|[x,m] \cap A (i)|> (n-1)(s+1).$$
Otherwise, $i$'s maximin fair share would be at most $s$. However, since $j$'s value for $[1,x]$ is at least his maximin fair share, $j$ has value at most $(n-1)(t+1)$ for $[x+1,m]$, i.e., 
$$|[x,m]  \cap A (j)| \le (n-1)(t+1).$$
Otherwise, $j$'s maximin fair share would be more than $t$. Now recall that $s \ge t$, which implies $(n-1)(s+1) \ge (n-1)(t+1)$. This means that agent $i$ has strictly more approved vertices in $[x+1,m]$ than agent $j$, namely, $x+1 \leq \max A(j) < \max A(i)$, a contradiction.
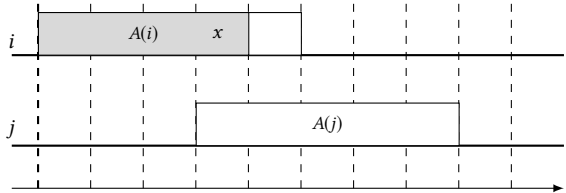
\begin{figure*}[htb]
\centering
\begin{tikzpicture}[scale=0.7,transform shape]
\draw [-latex](-0.5,0) coordinate(dd)-- (0,0) coordinate (O1) -- (10,0)coordinate(ff) node[above]{};
\draw [dashed,thick] (O1) -- (0,0.8) coordinate(B) -- (0,2.5) coordinate(A) -- ++(0,1)coordinate(ff2);

\draw [thick] (dd|-A) node[above]{\Large $i$}-- (A-|ff);
\draw [thick] (dd|-B) node[above]{\Large $j$}-- (B-|ff);

\foreach \xx in{1,2,...,9}{
\draw[dashed] (\xx,0) -- (\xx,0|- ff2);
}

\begin{scope}[shift={(A)}]
\node[above right=0.4cm and 0cm of A,right,draw, minimum width=5cm,minimum height=0.8cm,fill=white](n3a) {};
\node[above right=0.4cm and 0cm of A,right,draw, minimum width=4cm,minimum height=0.8cm,fill=gray!30](n3b) {$A(i)$};
\node[above right=0.19cm and 3.2cm of A](n3c) {\large $x$};
\end{scope}

\begin{scope}[shift={(B)}]
\coordinate(O2) at (3,0);
\node[above right=0.4cm and 0cm of O2,right,draw, minimum width=5cm,minimum height=0.8cm,fill=white](n4a) {$A(j)$};
\end{scope}
\end{tikzpicture}
\caption{Illustration of non-nested approval intervals. }\label{fig:nonnested}
\end{figure*}
\end{proof}

We are now ready to show that a moving-knife algorithm can be used to produce a Pareto-optimal MMS allocation. Intuitively, that following algorithm sequentially creates a connected bundle for $i_1$, $i_2$, and so on, such that we add one vertex to the bundle of $i_j$ as long as this bundle does not exceed the maximin fair share of the agent $i_{j+1}$. By doing so, we can ensure that each agent receives a connected bundle of at least maximin fair share and every item has been allocated to an agent who approves of it if there is any. Hence, the resulting allocation maximizes the utilitarian social welfare and so is Pareto-optimal. 

In what follows, we denote by $\mms_i(I')$ the maximin fair share of $i$ for the instance $I'$ with subpath $P'$, agent set $N'$, and valuations $(u_i)_{i \in N'}$. Namely, 
\[
\mms_i(I') = \max_{(A_1, \dots, A_{n'})\in\Pi_{n'}(P')}\min_{j\in [n']} u_i(A_j). 
\]
for each $i \in N'$ where $n'=|N'|$.

\begin{theorem}\label{thm:PO:MMS:approval}
When the graph $G$ is a path and valuations are binary and additive given by non-nested intervals on the path, there exists a polynomial-time algorithm that finds an MMS connected allocation that maximizes the utilitarian social welfare.  
\end{theorem}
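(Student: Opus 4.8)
The plan is to build a left-to-right moving-knife algorithm in the style of \citet{Bouveret2017} and to use Lemma~\ref{lem:interval} to verify the maximin-fair-share guarantee. First I would relabel the agents as $i_1,\dots,i_n$ so that $\min A(i_1)\le\cdots\le\min A(i_n)$, with ties broken by $\max A$; because the intervals are non-nested this also gives $\max A(i_1)\le\cdots\le\max A(i_n)$. Agents who approve nothing have $\mms_i=0$ and can be parked at the end with empty (or ``junk'') bundles, and items that nobody approves can be attached to a neighbouring bundle without affecting welfare or connectivity, so I may assume every agent approves at least one vertex. Write $P=(1,\dots,m)$ and $[j,k]=\{j,\dots,k\}$.

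The knife sweeps as follows. I process $j=1,\dots,n-1$, maintaining the leftmost still-unallocated vertex $\ell$. The sub-path $[\ell,m]$ together with the agents $i_j,\dots,i_n$ forms a residual instance $I_j$ that is again binary, additive and non-nested. I extend $i_j$'s bundle vertex by vertex, starting from $\ell$, for as long as either (i) the bundle is still strictly below $i_{j+1}$'s maximin fair share $\mms_{i_{j+1}}(I_j)$ in $I_j$, or (ii) the bundle has not yet reached the last vertex that $i_j$ approves but no later agent does (this vertex, when it exists, equals $\min\{\max A(i_j),\min A(i_{j+1})-1\}$); I then set $\ell$ past $i_j$'s bundle and recurse, finally giving $i_n$ everything that remains. (If $i_j$ has no such vertex and $\mms_{i_{j+1}}(I_j)=0$, agent $i_j$ receives the empty bundle, and one checks $\mms_{i_j}(I_j)=0$ there; one also checks that the knife never reaches the end of the path before all $n-1$ cuts are made, since the leftmost block of a witnessing MMS partition of $I_j$ is a short enough prefix.)

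For correctness I would prove three facts, each by induction on the number of agents, passing to the residual instances $I_j$. (1) \emph{Agent $i_j$ reaches her fair share in $I_j$.} If the knife stopped because rule (i) failed, then the prefix given to $i_j$ is valued at least $\mms_{i_{j+1}}(I_j)$ by $i_{j+1}$ and, since $\mms_{i_{j+1}}(I_j)\ge1$ forces the knife to have entered $i_{j+1}$'s residual approval interval, Lemma~\ref{lem:interval} applies with $i_j$ in the role of the earlier agent and gives $u_{i_j}(\pi(i_j))\ge\mms_{i_j}(I_j)$; if the knife stopped because rule (ii) failed, either $\pi(i_j)$ contains all of $i_j$'s residual approved vertices (so $u_{i_j}(\pi(i_j))\ge\mms_{i_j}(I_j)$ trivially) or Lemma~\ref{lem:interval} again applies, now with a hypothesis on $i_{j+1}$ that holds vacuously because $\mms_{i_{j+1}}(I_j)=0$. (2) \emph{The residual instance keeps its MMS values.} Here I would first prove a ``peeling lemma'': deleting a connected prefix $Q$ of a path with $u_k(Q)\le\mms_k$, together with one agent, cannot decrease agent $k$'s maximin fair share — the proof cuts $k$'s optimal connected partition, notes that a prefix of value at most $\mms_k$ can contain at most one whole part, and merges the (at most one) partly-covered part into its right neighbour. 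Using the integrality of binary valuations, $u_{i_{j+1}}(\pi(i_j))$ equals $\mms_{i_{j+1}}(I_j)$ exactly, and the contrapositive of Lemma~\ref{lem:interval} shows $u_{i_k}(\pi(i_j))\le\mms_{i_k}(I_j)$ for every later agent $i_k$; the peeling lemma then gives $\mms_{i_k}(I_{j+1})\ge\mms_{i_k}(I_j)$ for all $k>j$, so inductively every agent ends up with at least her original MMS value. (3) \emph{The allocation has maximum utilitarian welfare, hence is Pareto-optimal.} Because rule (ii) guarantees the knife never buries a vertex that, among the remaining agents, only $i_j$ approves, a short induction shows every vertex with at least one approver is assigned to an approver; for binary valuations this is exactly maximum utilitarian welfare, which implies Pareto-optimality (as noted in Section~\ref{sec:prem}).

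Polynomial running time is immediate once we know $\mms_{i_{j+1}}(I_j)$ can be computed efficiently: this is just the maximin fair share of a $0/1$-valued agent on a path, which binary search on the target value together with a greedy left-packing feasibility test computes in polynomial time, and only $O(n)$ such values are needed. The main obstacle is calibrating the cut rule so that (1), (2) and (3) hold simultaneously — the knife must reach far enough right to avoid stranding an approved vertex (for welfare) but not so far that removing $\pi(i_j)$ drives some later agent's maximin fair share below its previous value (for MMS) — and it is the interaction of Lemma~\ref{lem:interval} with the integrality of binary valuations, plus careful handling of the boundary cases (empty approval sets, a knife that has already swept past $i_{j+1}$'s interval start, a knife nearing the right end, and the final agent), that makes the argument go through.
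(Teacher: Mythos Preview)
Your cut rule is genuinely different from the paper's, and it does not work: it can push $i_j$'s bundle past $\max A(i_j)$ into territory that $i_j$ does not approve but later agents do, which kills the welfare claim in step~(3). Your rule~(ii) only prevents \emph{leaving behind} a vertex that only $i_j$ approves among the remaining agents; it does nothing to stop rule~(i) from \emph{absorbing} vertices that $i_{j+1},\dots,i_n$ approve but $i_j$ does not. Concretely, take two agents on the path $(1,\dots,10)$ with $A(i_1)=\{1\}$ and $A(i_2)=\{1,\dots,10\}$ (this is non-nested under the paper's definition since $\min A(i_1)=\min A(i_2)$). Then $\mms_{i_2}(I_1)=5$, rule~(ii) is vacuous, and rule~(i) forces you to extend until $u_{i_2}([1,x])\ge 5$, giving $\pi(i_1)=[1,5]$ and $\pi(i_2)=[6,10]$. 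This allocation is Pareto-dominated by $\pi'(i_1)=[1,1]$, $\pi'(i_2)=[2,10]$ (agent~$i_1$ is indifferent, agent~$i_2$ strictly improves from $5$ to $9$), so it is neither welfare-maximising nor Pareto-optimal. The inductive sentence in your step~(3), ``every vertex with at least one approver is assigned to an approver'', is therefore false for your algorithm.

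The paper avoids this by cutting on the \emph{current} agent's own maximin share: it gives $i_1$ the minimal prefix $[1,x]$ with $u_{i_1}([1,x])\ge\mms_{i_1}(I')$ (so necessarily $x\le\max A(i_1)$), and only extends further, to $\min A(i_2)-1$, when that extension contains no vertex any remaining agent approves. Lemma~\ref{lem:interval} is then used in the opposite direction from yours: not to certify that $i_1$ reaches her share (that is immediate), but to show that for every later agent $i_k$ the leftmost block $X_1$ of $i_k$'s witnessing MMS partition already satisfies $u_{i_1}(X_1)\ge\mms_{i_1}$, whence by minimality $\pi(i_1)\subseteq X_1$ and the residual MMS values do not drop. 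Your ``peeling lemma'' and contrapositive-of-Lemma~\ref{lem:interval} machinery for step~(2) is fine, and step~(1) would also go through, but the algorithm itself needs the paper's cut rule (or at least a cap at $\max A(i_j)$) for step~(3) to be true.
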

\begin{proof}
We design a recursive algorithm $\calA$ that takes as input the instance $I'$ with a subset $N'$ of agents, a subpath $P'=(1,2,\ldots,m)$, and a valuation profile $(u_i)_{i \in N'}$, and returns a utilitarian optimal MMS allocation of the items to the agents in $N'$. We write $n':=|N'|$. Assume without loss of generality that that every item is approved by some agent, and every agent approves some item. 
If $n'=1$, then the algorithm allocates all the items to the single agent. Suppose that $n'>1$. 
\begin{enumerate}
	\item Order agents as $i_1,i_2,\dots,i_{n'}$ so that $\min A(i_k) \le \min A(i_{k+1})$ and if $\min A(i_k) = \min A(i_{k+1})$ then $\max A(i_k) \le \max A(i_{k+1})$. Hence $A({i_1})$ \emph{ends earliest}, i.e. $\max A({i_1}) \le \max A({i_k})$ for all $k>1$.
	\item Set $x$ to the minimum such that $u_{i_1}([1,x]) \geq \mms_{i_1}(I')$. 
	\begin{itemize}
		\item If $[1,x] \cap A(i_2) \neq \emptyset$, allocate $[1,x]$ to agent $i_1$.
		\item If $[1,x] \cap A(i_2) = \emptyset$, allocate $[1, \min A(i_2) - 1]$ to agent $i_1$.
	\end{itemize}
	\item Recurse, by relabeling items so that the left-most item is again called 1. 
\end{enumerate}
See Figures \ref{fig:nonnested:alg} for an illustration. Note that the reduced instance still remains non-nested since otherwise there would be a pair of agents $i,j$ in the original instance with $\min A(i)< \min A(j)$ and $\max A(j)<\max A(i)$.  

One can compute the maximin fair share of each agent in polynomial time if the graph is a path \citep{Bouveret2017}; thus it is immediate that $\calA$ runs in polynomial time. Now we will prove by induction on the number of agents the following:
\begin{itemize}
\item agent $i \in N'$ receives a connected bundle of value at least $\mms_i(I')$; and
\item every {\em approved} item $v$ (i.e., some $i \in N'$ approves $v$) has been allocated to some agent in $N'$ approving $v$.
\end{itemize}

The claim is immediate for $|N'|=1$. Suppose that the claim holds for $|N'|=\ell-1$; we will prove it for $|N'|=\ell$. Let $\pi$ denote the allocation returned by the algorithm. To show that every approved item has been allocated to some player, observe first that agent $i_1$ has positive value for every approved item in $\pi(i_1)$. Indeed, by construction of the algorithm, agent $i_1$ does not receive any item $y$ that appears after $\max A(i_1)$; and no agent $i_k$ with $k>1$ approves item $y$ that appears before $\min A(i_1)$. Also, there is no item $y \not \in \pi(i_1)$ only approved by agent $i_1$: If there is such an item $y \not \in \pi(i_1)$, then it means that $x<y$ but $y \in A(i_1) \setminus A(i_{2})$, and thereby 
\[
x<y< \min A(i_{2}) \leq \max A(i_1).
\]
Hence, $[1,x] \cap A(i_2) = \emptyset$ and $y \in \pi(i_1)$, a contradiction. Applying the induction hypothesis, every approved item not in $\pi(i_1)$ is allocated to some player in $N' \setminus \{i_1\}$ who approves that item. 

Now it remains to prove that every agent receives a bundle of value at least his maximin fair share. Clearly, agent $i_1$ receives a bundle of value at least $\mms_{i_1}(I')$. Take any $i_k$ with $k>1$. Let $I''$ denote the reduced instance after the bundle for $i_1$ is removed. It is clear that $\mms_{i_k}(I'') \ge \mms_{i_k}(I')$ if $i_k$ approves no vertex in $\pi(i_1)$ or $\mms_{i_k}(I')=0$. 
So suppose otherwise, i.e., agent $i_k$ approves at least one vertex in $\pi(i_1)$ and has positive maximin fair share $\mms_{i_k}(I')>0$. 
Now let $X_1,X_2,\ldots,X_{n'}$ be a partition of $P'$ into connected bundles witnessing $\mms_{i_k}(I')$, i.e., $\mms_{i_k}(I')=\min_{h \in [n']} u_{i_k} (X_h)$. Without loss of generality, we assume that the left most item $1$ belongs to $X_1$. 
Since $u_{i_k}(X_1)\geq \mms_{i_k}(I')>0$, $i_k$ approves at least one vertex in $X_1$, thereby implying that the initial vertex of $i_k$'s approval interval appears before or at the right-most vertex of $X_1$. Since $\min A(i_2) \leq \min A(i_k)$, it means that $[1,x] \cap A(i_2) \neq \emptyset$ and hence $\pi(i_1)=[1,x]$. By Lemma \ref{lem:interval}, since $i_k$ values $X_1$ at least as highly as her maximin fair share at $I'$, $i_{1}$ evaluates $X_1$ at least as valuable as her maximin fair share at $I'$. Now due to the minimality of $x$, we get that $[1,x] \subseteq X_1$. Hence, it follows that $((X_1\setminus X)\cup X_2,X_3,\ldots,X_{|N'|})$ is a partition certifying that $\mms_{i_k}(I'') \ge \mms_{i_k}(I')$. Thus, by the induction hypothesis, every $i_k$ with $k>j$ receives a connected bundle of value at least $\mms_{i_k}(I')$. 
\end{proof}

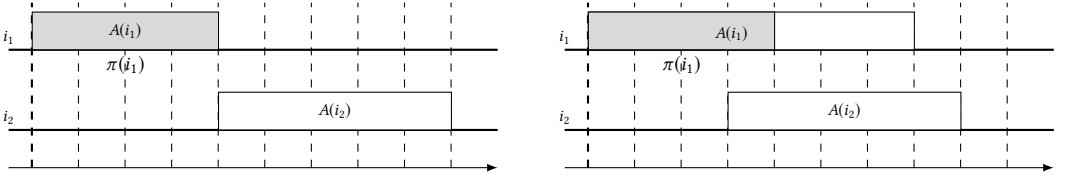
\begin{figure*}
\begin{tabular}{cc}
\begin{minipage}{0.45\textwidth}
\begin{tikzpicture}[scale=0.62,transform shape]
\draw [-latex](-0.5,0) coordinate(dd)-- (0,0) coordinate (O1) -- (10,0)coordinate(ff) node[above]{};
\draw [dashed,thick] (O1) -- (0,0.8) coordinate(B) -- (0,2.5) coordinate(A) -- ++(0,1)coordinate(ff2);

\draw [thick] (dd|-A) node[above]{$i_{1}$}-- (A-|ff);
\draw [thick] (dd|-B) node[above]{$i_{2}$}-- (B-|ff);

\foreach \xx in{1,2,...,9}{
\draw[dashed] (\xx,0) -- (\xx,0|- ff2);
}

\begin{scope}[shift={(A)}]
\node[above right=0.4cm and 0cm of A,right,draw, minimum width=4cm,minimum height=0.8cm,fill=gray!30](n3b) {$A(i_1)$};
\node[below =0cm of n3b](n3c) {\Large $\pi(i_1)$};
\end{scope}

\begin{scope}[shift={(B)}]
\coordinate(O2) at (4,0);
\node[above right=0.4cm and 0cm of O2,right,draw, minimum width=5cm,minimum height=0.8cm,fill=white](n4a) {$A(i_{2})$};
\end{scope}
\end{tikzpicture}
\end{minipage}
\qquad\quad
\begin{minipage}{0.45\textwidth}
\begin{tikzpicture}[scale=0.62,transform shape]
\draw [-latex](-0.5,0) coordinate(dd)-- (0,0) coordinate (O1) -- (10,0)coordinate(ff) node[above]{};
\draw [dashed,thick] (O1) -- (0,0.8) coordinate(B) -- (0,2.5) coordinate(A) -- ++(0,1)coordinate(ff2);

\draw [thick] (dd|-A) node[above]{$i_{1}$}-- (A-|ff);
\draw [thick] (dd|-B) node[above]{$i_{2}$}-- (B-|ff);

\foreach \xx in{1,2,...,9}{
\draw[dashed] (\xx,0) -- (\xx,0|- ff2);
}

\begin{scope}[shift={(A)}]
\node[above right=0.4cm and 0cm of A,right,draw, minimum width=7cm,minimum height=0.8cm,fill=white](n3a) {};
\node[above right=0.4cm and 0cm of A,right,draw, minimum width=4cm,minimum height=0.8cm,fill=gray!30](n3b) {};
\node[below =0cm of n3b](n3c) {\Large $\pi(i_1)$};
\node[above right = 0.1cm of n3c] {$A(i_1)$};
\end{scope}

\begin{scope}[shift={(B)}]
\coordinate(O2) at (3,0);
\node[above right=0.4cm and 0cm of O2,right,draw, minimum width=5cm,minimum height=0.8cm,fill=white](n4a) {$A(i_{2})$};
\end{scope}
\end{tikzpicture}
\end{minipage}
\end{tabular}
\caption{Illustration of the algorithm in the proof of Theorem \ref{thm:PO:MMS:approval}.}\label{fig:nonnested:alg}
\end{figure*}

We give an example where the algorithm used in Theorem~\ref{thm:PO:MMS:approval} fails to find a PO and MMS allocation when agents' preferences are given by approval intervals that do not obey the non-nestedness condition.
\begin{example}[Example where the moving-knife algorithm fails for nested intervals]
\label{ex:nested}
\upshape
Consider an instance with two agents and five vertices on a path, with binary additive valuations as below.

\begin{center}
	\upshape
	\setlength{\tabcolsep}{6.4pt}
	\begin{tabular}{rccccc}
		\toprule
		&
		\multicolumn{5}{l}{\!\!\!
			\begin{tikzpicture}[scale=0.51, transform shape, every node/.style={minimum size=8mm, inner sep=1.2pt, font=\huge}]	
			\node[draw, circle](2)  at (1.2,0) {$v_1$};
			\node[draw, circle](3)  at (2.4,0) {$v_2$};
			\node[draw, circle](4)  at (3.6,0) {$v_3$};
			\node[draw, circle](5)  at (4.8,0) {$v_4$};
			\node[draw, circle](6)  at (6,0) {$v_5$};
			\draw[-, >=latex,thick] (2)--(3) (3)--(4) (4)--(5) (6)--(5);
			\end{tikzpicture}\!\!\!\!
			\vspace{-2pt}
		} \\
		\midrule
		Alice:\!\! &  1 &  1 & 1 & 1 & 1 \\
		Bob:\!\! & 0 & 1 & 1 & 0 & 0 \\
		\bottomrule
\end{tabular}
\end{center}
\smallskip

Applying the algorithm in Theorem~\ref{thm:PO:MMS:approval}, we give $\{v_1,v_2\}$ to Alice and $\{v_3,v_4,v_5\}$ to Bob. This allocation is Pareto-dominated by the allocation giving $\{v_1,v_2,v_3\}$ to Bob and $\{v_4,v_5\}$ to Alice. Noticeably, this example does not admit an MMS allocation that maximizes the utilitarian social welfare: the unique utilitarian optimal allocation is the allocation giving everything to Alice, which clearly violates the MMS requirement for Bob.
\qed
\end{example}

\section{Conclusion}
In this work, we have studied the computational complexity of finding Pareto-efficient outcomes, in the natural setting where we need to allocate indivisible items into connected bundles. We showed that although finding a Pareto-optimal allocation is easy for some topologies, this does not extend to general trees. Further, we proved that when imposing additional fairness requirements, finding a Pareto-optimum becomes NP-hard even when the underlying item graph is a path. We have also seen that a Pareto-optimal EF1 allocation may not exist with the contiguity requirement while such an allocation always exists when these requirements are ignored. 

While we have focused on the divisions of goods, studying an allocation of {\em chores} with graph-connectivity constraints is an interesting future direction. In particular, one may ask what graph structures give positive results in terms of both existence and computational complexity. Finally, several recent papers studied the fair division problem over social networks \citep{abebe,Bei2017,Gourves2017,Bredereck2018} where a social network describes the envy relation between agents. A particular focus is laid on {\em local envy-freeness}, requiring that each agent does not envy the bundle of her neighbours. Although our graph describes a relationship among {\em items} rather than agents, it would be interesting to analyze `intermediate' cases. For example, suppose that we only focus on the envy between a pair of agents who are allocated adjacent bundles in $G$, what graph structure guarantees the existence of a locally envy-free and Pareto-optimal allocation? 

\smallskip
\noindent
\textbf{Acknowledgements.}
We thank reviewers at AAAI-19 and at AI$^3$ for helpful feedback.
This work was supported by ERC grant 639945 (ACCORD).

%
%



\end{document}